\newcommand{\lR}{\mathrm{I\hspace{-0.7mm}R}}
\newtheorem{theorem}{Theorem}
\newtheorem{lemma}{Lemma}
\newtheorem{assumption}{Assumption}
\numberwithin{equation}{section}
\begin{document}
\pagestyle{plain}




\title{\LARGE\textbf{Local and Global Existence of Solutions to Scalar Equations on Spatially Flat Universe as a Background with Non-minimal Coupling}}

\author{{Fiki T. Akbar$^{1,2}$, Bobby E. Gunara$^{1,2}$\footnote{Corresponding author}, Muhammad Iqbal$^{2}$, Hadi Susanto$^3$}\\ \\
$^{1}$\textit{\small Indonesian Center for Theoretical and
Mathematical Physics (ICTMP)}\\
$^{2}$\textit{\small Theoretical Physics Laboratory}\\
\textit{\small Theoretical High Energy Physics and Instrumentation Research Group,}\\
\textit{\small Faculty of Mathematics and Natural Sciences,}\\
\textit{\small Institut Teknologi Bandung}\\
\textit{\small Jl. Ganesha no. 10 Bandung, Indonesia, 40132}\\
$^{3}$\textit{\small Department of Mathematical Sciences, University of Essex,}\\
\textit{\small Colchester, CO4 3SQ, United Kingdom}\\ \\
\small email: ftakbar@fi.itb.ac.id, bobby@fi.itb.ac.id,\\ \small muhammad.iqbal7@students.itb.ac.id, hsusanto@essex.ac.uk}

\date{}

\maketitle




\begin{abstract}

We prove the wellposedness of scalar wave equations on spatially flat universe as a background with nonminimal coupling with the scalar potential turned on by introducing the $k$-order linear energy and the corresponding energy norm. In the local case, we show that both the $k$-order linear energy and the  energy norm are bounded for finite time with initial data in $H^{k+1}\times H^{k}$. Whereas in the global case, we have to add three assumptions  related to the nonminimal coupling constant, the scale factor of spacetimes, and the form of the scalar potential that has to be a polynomial with a small positive parameter. Then, we show that the solution does globally exist with a particular decay estimate that depends on the scale factor of the spacetimes. Finally, we provide some physical models that support our general setup.
\end{abstract}




\section{Introduction}
It is of interest to study the Klein-Gordon equation because it describes the dynamics of the spinless particle in our universe at quantum level or it can be viewed as the scalar wave equation at classical level in our universe. Moreover, to get a more realistic picture we have to include the influence of geometrical properties of the universe which may also determine by its matter distribution. It is extremely difficult, however, to solve the Klein-Gordon equation on the family of four dimensional Friedmann-Robertson-Walker spacetimes with general couplings even at the classical level. Therefore, we have to specify both the spacetime and the coupling in order to get a solvable model. Several simple models in four dimensions have been studied, for example, in  \cite{KlainermanSarnak:1981, AbbasiCraig:2014, Yagdijan:2013}. 

In this paper we prove the local and global existence of solutions of the Klein-Gordon equation in higher dimensional spatially flat Friedmann-Robertson-Walker spacetimes with non-minimal coupling between the scalar curvature and the scalar field, and the scalar potential turned on. This additional non-minimal coupling is the simplest generalization of the scalar field theory on curved spacetimes \cite{Callan:1970,Birrell:1980}, which can be viewed, for example, as a result of quantum corrections \cite{Ishikawa:1983, Birrelbooks:1984}.

The starting point of proving  the local and global existence of solutions is by introducing the $k$-order linear energy and the corresponding energy norm. In the local case, we show that in order to admit a regular solution both energy functionals must be bounded below a real constant $C$ for finite time $T < \infty$ with initial data in $H^{k+1}\times H^{k}$. As we take $T \to \infty$, namely the global existence, we give three additional assumptions related to the nonminimal coupling constant, the scale factor of spacetimes, and the form of the scalar that has to be a polynomial with small positive parameter. Using these assumptions, we prove that the supremum of the energy norm is bounded and thus, we could have a decay estimate.

We organize the paper as follows. In Section \ref{sec:SpatiallyFlatSpacetimes} we briefly review spatially flat spacetimes in higher dimension. We discuss some local properties of the real scalar field on higher dimensional spatially flat spacetimes by introducing $k$-order linear energy and the corresponding energy norm in Section \ref{sec:RealScalarSpatiallyFlatSpacetimes}. In Section \ref{sec:LocalEx} we provide a proof of the local existence and the uniqueness of solutions together with a smoothness property. In Section \ref{sec:GlobalEx} we prove that the solutions could exist globally and they have a particular decay estimate. Finally, we discuss some models, in which the global solution does exist in Section \ref{sec:Models}.




\section{Spatially Flat Spacetimes in Higher Dimension}
\label{sec:SpatiallyFlatSpacetimes}

In this section, we shortly review the higher dimensional conformally flat  spacetime which can be constructed by the   $D$-dimensional spatially flat Lorentzian manifold, $\mathcal{M}^{D}$, $D \geq 4$ with standard coordinates, $x^\mu = (x^{0}=t,x^{i})$, $\mu = 0,1,\dots,D-1$, $i=1,2,\dots,D-1$, and is equipped by Lorentzian metric with signature $\{-1,1,\dots,1\}$. We can write down the metric  as
\begin{equation}
ds^2 = -dt^2 + a^2(t)\sum_{i=1}^{D-1}dx_i^2 \:, \label{fh2}
\end{equation}
with $x^{i}$ being the usual Cartesian coordinates for $\lR^{D-1}$. Defining a new time coordinate $\tau$ by
\begin{equation}
\frac{d\tau}{dt}= \frac{1}{a(t)}\:, \label{ja}
\end{equation} 
 we can write (\ref{fh2}) as
\begin{equation}
ds^2  = a^2(\tau) \left(-d\tau^2 + \sum_{i=1}^{D-1}dx_i^2 \right)\:. \label{ea}
\end{equation}
Thus, our spacetime $\mathcal{M}^{D}$ is conformal to flat Minkowski space $M^{D+1} \simeq \lR \times \lR^{D-1}$. In terms of components, we can write the metric of $\mathcal{M}^{D}$ as
\begin{equation}
g_{\mu\nu} = a^2(\tau)\eta_{\mu\nu}\:, \label{spacetimemetric}
\end{equation}
where $\eta_{\mu\nu} = \mathrm{diag}(-1,1,\dots,1) $ is the component of Minkowski metric. Furthermore, it is of interest to write down the Ricci tensor and the scalar curvature related to metric (\ref{spacetimemetric}) 
\begin{eqnarray}
R_{\mu\nu} &=& \dot{H} \left(\eta_{\mu\nu} - (D-2) \delta^{0}_{\mu} \;\delta^{0}_{\nu}\right) + H^2 (D-2) \left(\eta_{\mu\nu} +  \delta^{0}_{\mu} \;\delta^{0}_{\nu}\right) \:,\nonumber\\
R &=& (D-1) a^{-2} \left( 2  \dot{H} + (D-2) H^2\right) \:, \label{scalarcurv}
\end{eqnarray}
respectively, assuming that the scale factor $a(\tau)$ belongs to $C^n$-function with $n \ge 2$ for all $\tau > 0$ where $\dot{a} \equiv da/d\tau$ and we have defined the Hubble parameter $H \equiv \dot{a}/ a$.

The higher dimensional Friedmann equations describing an accelerated universe for single component matter are given by \cite{Chatterjee:1990tq}
\begin{eqnarray}
\frac{(D-2)(D-1)}{2a^4}H^2 & = & 8\pi\rho ~ ,\nonumber \\
-\frac{(D-2)}{a^3}\dot{H} - \frac{(D-2)(D-5)}{2a^4}H^2 & = & 8\pi P\:,
\end{eqnarray}
together with equation of state $P=w\rho$. The general solutions of this equation have the form
\begin{equation}\label{gensola}
a(\tau) = \begin{cases}
(\tau+\tau_{0})^{\frac{2}{(D-1)(w+1) - 2}} ~ , & w \neq -(D-3)/(D-1)\\
 e^{\alpha (\tau + \tau_{0})} ~ , & w = -(D-3)/(D-1)
\end{cases}
\end{equation}
which will be useful for our analysis in the last part of this paper. Some examples, which may be considered as the higher dimensional standard model of cosmology, are listed in Table \ref{tab1} \cite{Chatterjee:1990tq}.

\begin{table}
	\centering
	\begin{tabular}{| l | l | l |}
		\hline & &  \\
		Conditions & $w$ & $a(\tau)$  \\
		\hline \hline & & \\
		 Matter Dominated & $w = 0$ & $(\tau+\tau_0)^{\frac{2}{D-3}}$ \\
		\hline  & & \\
		$\Lambda$-Dominated & $w = -1$ & $(\tau+\tau_0)^{-1}$\\
		\hline  & &\\
		Radiation Dominated & $w = \frac{1}{D-1}$ & $(\tau+\tau_0)$ \\
		\hline
		\end{tabular}
		\\
		\vspace{1.5mm}
	\caption{Standard Models of Cosmology in higher dimension.}
	\label{tab1}
\end{table}

For a matter dominated universe, which is called higher dimensional Einstein-de Sitter universe, the universe consists only of non-relativistic matter (dust) and has zero cosmological constant. The Ricci tensor related to metric (\ref{spacetimemetric}) is given by
\begin{equation}
R_{\mu\nu} =\frac{2 (D-1)}{(D-3)^2 \tau^2} \left(\eta_{\mu\nu} + (D-2) \delta^{0}_{\mu} \;\delta^{0}_{\nu}\right) \:,
\end{equation}
and the scalar curvature is given by
\begin{equation}
R =  \frac{4(D-1)}{(D-3)^2\tau^{2(D-1)/(D-3)}}\:. \label{scalarcurv1}
\end{equation}

\section{Real Scalar Field in Spatially Flat Universe}
\label{sec:RealScalarSpatiallyFlatSpacetimes}

In this section, we discuss some local properties of the real scalar field on the higher dimensional spatially flat universe as a background with additional non-minimal coupling where the coupling interaction of the scalar field $\phi$ is proportional to the scalar curvature of the spacetime. We show that the nonlinear terms and the $k$-order linear energy are bounded if the energy norm is also bounded.

The action of our theory  has the form
\begin{equation}
\mathcal{S} = \int d\tau dx \sqrt{-g} \left(\frac{1}{2}\partial_\mu\phi\:\partial^\mu\phi+\frac{\xi}{2}R\phi^2 - V(\phi)\right)\:, \label{action}
\end{equation} 
where $g$ and $R$ are the determinant and the scalar curvature of the metric (\ref{spacetimemetric}), respectively. The non-minimal coupling is introduced in the second term of the right hand side of (\ref{action}) with positive constant $\xi$ \footnote{Model with positive $\xi$ is called canonical, while that with negative $\xi$ is called phantom.}. The real function $V(\phi)$ denotes the scalar potential which is assumed to be smooth and satisfy the following conditions,
\begin{eqnarray}
(1) && V(0) = 0 ~ ,\nonumber\\
(2) && \partial_{\phi}V(0) = 0 ~ .\label{Vcond}
\end{eqnarray}
%
The conditions in (\ref{Vcond}) are satisfied by several known scalar potentials such as the $\phi^4$ theory and the sine-Gordon theory.

The equation of motions of scalar field in this case is given by
\begin{equation}
\nabla_\mu\nabla^\mu\phi-\xi R\phi + \partial_\phi V(\phi) = 0\:,
\end{equation}
where $\nabla_\mu$ is a covariant derivative with respect to the metric (\ref{ea}). Using (\ref{ea}) and (\ref{scalarcurv}), we can write the equation of motion in form of nonlinear waves equation
\begin{equation}
\partial_{\tau}^2 \phi  - \Delta\phi = F(\phi,\partial_\tau\phi) \label{iPDE} \:,
\end{equation}
where
\begin{equation}
F(\phi,\partial_\tau\phi) \equiv -(D-2) H~ \partial_\tau\phi- \xi (D-1)\left( 2 \dot{H} + (D-2) H^2\right)\phi + a^2 \partial_{\phi}V(\phi) \:, \label{nonlinearterm}
\end{equation}   
and $\Delta$ is Laplacian in $\lR^{D-1}$. The energy-momentum tensor for system (\ref{action}) is given by
\begin{eqnarray}
T_{\mu\nu}  =  \nabla_\mu\phi\;\nabla_\nu\phi - \frac{1}{2}g_{\mu\nu} \nabla_\lambda\phi\;\nabla^\lambda\phi - g_{\mu\nu} V(\phi) - \xi G_{\mu\nu}\phi^2 -\xi\left(g_{\mu\nu} \nabla_\lambda\nabla^\lambda\phi^2 - \nabla_\mu\nabla_\nu\phi^2 \right)\:,\nonumber\\
\end{eqnarray}
where $G_{\mu\nu}$ is Einstein tensor of $\mathcal{M}^D$. Then, we could define an energy functional $E = a^{2D-2}\int T_{00} ~ d^{D-1}x$. However, such a functional covers only $L^2$-norm functions and it is difficult to obtain a decay estimate of the scalar field from it in the global case (see Section \ref{sec:GlobalEx}) since it contains the nonlinear terms, namely the scalar potential and the nonminimal coupling. To overcome the problem, we introduce the $k$-order linear energy 
\begin{equation}
\mathcal{H}_{k}[\phi] = \frac{1}{2}\sum_{|\alpha|\leq k}\int_{\lR^{D-1}}\:\left[(\partial^\alpha\partial_\tau \phi)^2 + \left|\nabla\partial^\alpha \phi\right|^2\right]\:dx\:, \label{energy}
\end{equation}
with $\alpha$ being a multi index, which will be used in this paper. We also define an energy norm as
\begin{equation}
|\phi(\tau,\cdot)|_{k} := \|\phi(\tau,\cdot)\|_{H^{k+1}(\lR^{D-1})} + \|\partial_{\tau}\phi(\tau,\cdot)\|_{H^{k}(\lR^{D-1})}
\end{equation}
This form of energy can be used as a bound for the nonlinear term. In the following lemmas, we prove the properties of the nonlinear term and the linear energy.

\begin{lemma}
	\label{nonlinearbound}
	Let $\phi$ be a real function such that for all $\tau \in [\tau_0,\tau_0 + T]$ and $k\in\mathbb{N}_{0}$,
	\begin{equation}
	|\phi(\tau,\cdot)|_{k} \leq C\:.
	\end{equation}
	Then for all $\tau \in [\tau_0,\tau_0 + T]$ and $k>(D-1)/2$, we have
	\begin{equation}
	\left(\sum_{|\alpha|\leq k}^{} \int_{\lR^{D-1}} [\partial^\alpha F(\phi,\partial_\tau \phi)]^2 \:dx\right)^{1/2} \leq C \:, \label{nonlinearboundeq}
	\end{equation}
	where $C$ depends only on the initial data, $T$, $k$ and the bound of the scalar potential.
\end{lemma}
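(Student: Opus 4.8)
\textit{Proof proposal.}

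The plan is to bound the left-hand side of (\ref{nonlinearboundeq}), which is exactly $\|F(\phi,\partial_\tau\phi)(\tau,\cdot)\|_{H^k(\lR^{D-1})}$, by decomposing $F$ according to (\ref{nonlinearterm}) as $F = F_1 + F_2 + F_3$ with $F_1 = -(D-2)H\,\partial_\tau\phi$, $F_2 = -\xi(D-1)\bigl(2\dot H + (D-2)H^2\bigr)\phi$ and $F_3 = a^2\,\partial_\phi V(\phi)$, and then estimating each piece separately in $H^k$ by the triangle inequality.

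For $F_1$ and $F_2$ the coefficients depend only on $\tau$, so $\|F_1(\tau,\cdot)\|_{H^k} = (D-2)\,|H(\tau)|\,\|\partial_\tau\phi(\tau,\cdot)\|_{H^k}$ and $\|F_2(\tau,\cdot)\|_{H^k} = \xi(D-1)\,|2\dot H(\tau)+(D-2)H^2(\tau)|\,\|\phi(\tau,\cdot)\|_{H^k}$. Since $a\in C^n$ with $n\ge 2$ and $a(\tau)>0$ for $\tau>0$, the functions $H=\dot a/a$, $\dot H$ and $a$ are continuous and $a$ is bounded away from zero on the compact interval $[\tau_0,\tau_0+T]$; hence $M_T := \sup_{[\tau_0,\tau_0+T]}\bigl(|H|+|\dot H|+a^2\bigr)<\infty$. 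Combined with $\|\partial_\tau\phi\|_{H^k}\le|\phi|_k\le C$ and $\|\phi\|_{H^k}\le\|\phi\|_{H^{k+1}}\le|\phi|_k\le C$, this controls $\|F_1\|_{H^k}+\|F_2\|_{H^k}$ by a constant depending only on $C$, $T$, $D$ and $\xi$.

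The term $F_3$ is the genuinely nonlinear one and carries the main work. Here I would invoke the standard Moser-type composition estimate: because $k>(D-1)/2$, the space $H^k(\lR^{D-1})$ embeds into $L^\infty(\lR^{D-1})$ and forms a Banach algebra; and by condition $(2)$ in (\ref{Vcond}) the map $f:=\partial_\phi V$ is smooth with $f(0)=0$, so $f(\phi)=\phi\int_0^1\partial_\phi^2V(s\phi)\,ds$ and one obtains $\|\partial_\phi V(\phi(\tau,\cdot))\|_{H^k}\le K\bigl(\|\phi(\tau,\cdot)\|_{L^\infty}\bigr)\,\|\phi(\tau,\cdot)\|_{H^k}$, where $K$ also depends on $\sup_{|s|\le\|\phi\|_{L^\infty}}|\partial_\phi^{\,j+1}V(s)|$ for $1\le j\le k$ --- these are the ``bound of the scalar potential'' in the statement. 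Using Sobolev embedding once more, $\|\phi(\tau,\cdot)\|_{L^\infty}\le C_{\mathrm{Sob}}\|\phi(\tau,\cdot)\|_{H^k}\le C_{\mathrm{Sob}}\,C$ uniformly in $\tau$, so $\|F_3(\tau,\cdot)\|_{H^k}\le M_T\,K(C_{\mathrm{Sob}}C)\,C$. Adding the three bounds yields (\ref{nonlinearboundeq}) with a constant depending only on $C$, $T$, $k$, $D$, $\xi$ and $V$.

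The main obstacle is the composition estimate for $\partial_\phi V(\phi)$ in $H^k$: one expands $\partial^\alpha(\partial_\phi V(\phi))$ by the Leibniz/Faà di Bruno rule into sums of products $\partial_\phi^{\,m+1}V(\phi)\,\partial^{\beta_1}\phi\cdots\partial^{\beta_m}\phi$ with $|\beta_1|+\cdots+|\beta_m|=|\alpha|\le k$ and each $|\beta_i|\ge 1$, and controls every such product in $L^2$ by Gagliardo--Nirenberg interpolation against $\|\phi\|_{L^\infty}$ and $\|\phi\|_{H^k}$, using $f(0)=0$ together with the mean value theorem to keep the lowest-order contribution proportional to $\|\phi\|_{L^2}$ rather than an uncontrolled constant (which would be fatal on the unbounded domain $\lR^{D-1}$). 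Everything else --- boundedness of the $\tau$-dependent coefficients and passing from $|\phi|_k$ to the relevant $H^k$ and $H^{k+1}$ norms --- is routine, provided $\tau_0>0$ so that $a$ stays positive and smooth on $[\tau_0,\tau_0+T]$.
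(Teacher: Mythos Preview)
Your proposal is correct and follows essentially the same route as the paper: decompose $F$ into the two $\tau$-dependent linear terms plus the composition $a^2\partial_\phi V(\phi)$, bound the coefficients $H,\dot H,a$ on the compact interval, and control $\|\partial_\phi V(\phi)\|_{H^k}$ via Sobolev embedding ($k>(D-1)/2$) together with the chain rule/Fa\`a di Bruno expansion and the condition $\partial_\phi V(0)=0$. The only cosmetic difference is that the paper treats the cases $|\alpha|>0$ and $|\alpha|=0$ separately (using $\partial_\phi V(0)=0$ only in the latter), whereas you package both at once through the Moser-type composition estimate and the representation $\partial_\phi V(\phi)=\phi\int_0^1\partial_\phi^2 V(s\phi)\,ds$; your version is in fact the cleaner and more standard formulation.
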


\begin{proof}
	First, we consider the case of $|\alpha|>0$. The spatial derivative of Equation (\ref{nonlinearterm}) gives
	\begin{equation}
	\partial^\alpha F(\phi,\partial_\tau\phi) = -(D-2) H ~ \partial^\alpha \partial_\tau\phi- \xi (D-1)\left( 2 \dot{H} + (D-2) H^2\right)\partial^\alpha\phi + a^2\partial^\alpha \partial_{\phi} V\:.
	\end{equation}
	Since $\tau \in [\tau_0,\tau_0 + T]$, $a(\tau)$ is a regular $C^n$-function, and using the hypothesis that the first and second terms are bounded, we thus have 
	\begin{equation}
	\left(\sum_{|\alpha|\leq k}^{} \int_{\lR^{D-1}} [\partial^\alpha F(\phi,\partial_\tau \phi)]^2 \:dx\right)^{1/2} \leq C\left\{1 + \|\partial_{\phi} V\|_{H^{k}(\lR^{D-1})} \right\}
	\end{equation}

	To estimate the derivative of the scalar potential, we write down $\partial^\alpha \partial_{\phi} V$ as
	\begin{equation}
	\partial^\alpha \partial_{\phi} V = \left(\partial_{\phi}^{\beta} \partial_{\phi} V\right)\;\partial^{\gamma_1}\phi\;\partial^{\gamma_2}\phi\ldots\partial^{\gamma_i}\phi\:,
	\end{equation}
	where $\gamma_1 + \gamma_2 + \ldots + \gamma_i = \alpha$. Since the scalar potential is a smooth function, then by Sobolev embedding theorem for $k>(D-1)/2$, the hypothesis of the lemma implies that $\partial_{\phi}^{\beta} \partial_{\phi} V$ is bounded. Thus,
	\begin{equation}
	\|\partial_{\phi} V\|_{H^{k}(\lR^{D-1})} \leq C\|\phi\|_{H^{k}(\lR^{D-1})} \leq C\:,
	\end{equation}
	and we obtain Equation (\ref{nonlinearboundeq}).
	
	For $|\alpha| = 0$, we have
	\begin{equation}
	\left(\int_{\lR^{D-1}}\:[F(\phi,\partial_\tau \phi)]^2 \:dx\right)^{1/2} \leq C \left\{\|\phi\|_{L^{2}(\lR^{D-1})} + \|\partial_{\tau} \phi\|_{L^{2}(\lR^{D-1})} + \|\partial_{\phi} V\|_{L^{2}(\lR^{D-1})}\right\}\:.
	\end{equation}
	The first and second terms at the right hand side are bounded to a constant by the hypothesis. Using the assumption that $\partial_{\phi}V(0) = 0$, we obtain the estimate
	\begin{equation}
	\|\partial_{\phi} V\|_{L^{2}(\lR^{D-1})} \leq C \|\phi\|_{L^{2}(\lR^{D-1})} \leq C\:,
	\end{equation}
	where the constant $C$ depends on the bound of the scalar potential. Hence, the proof is complete. 
\end{proof}

\begin{lemma}
	\label{energybound}
	Let $\phi$ be a real function such that for all $\tau \in [\tau_0,\tau_0 + T]$ and $k\in\mathbb{N}_{0}$,
	\begin{equation}
	|\phi(\tau,\cdot)|_{k} \leq C\:.
	\end{equation} 
	Then for $\tau \in [\tau_0,\tau_0 + T]$ and $k>(D-1)/2$, we have
	\begin{equation}
	\mathcal{H}_k^{1/2}[\phi](\tau) \leq \mathcal{H}_k^{1/2}[\phi](\tau_0) + \frac{1}{2}CT\:, 
	\end{equation}
	where the constant $C$ depends only on the initial data, $T$, $k$ and the bound of the scalar potential.
\end{lemma}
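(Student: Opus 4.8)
The plan is to run the classical energy estimate for the linear wave operator $\partial_\tau^2 - \Delta$, with $\partial^\alpha F$ playing the role of a source term that Lemma~\ref{nonlinearbound} lets us absorb directly into $\mathcal{H}_k^{1/2}[\phi]$. First I would apply $\partial^\alpha$, for each multi-index with $|\alpha|\le k$, to the wave equation (\ref{iPDE}) to get $\partial_\tau^2\partial^\alpha\phi - \Delta\partial^\alpha\phi = \partial^\alpha F(\phi,\partial_\tau\phi)$, and then differentiate the $k$-order linear energy (\ref{energy}) in $\tau$:
\begin{equation}
\frac{d}{d\tau}\mathcal{H}_k[\phi] = \sum_{|\alpha|\le k}\int_{\lR^{D-1}}\left[\partial^\alpha\partial_\tau\phi\;\partial^\alpha\partial_\tau^2\phi + \nabla\partial^\alpha\phi\cdot\nabla\partial^\alpha\partial_\tau\phi\right]dx. \nonumber
\end{equation}

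Next I would integrate by parts in the second summand; the boundary terms vanish because the relevant derivatives lie in $L^2(\lR^{D-1})$, so $\int\nabla\partial^\alpha\phi\cdot\nabla\partial^\alpha\partial_\tau\phi\,dx = -\int\Delta\partial^\alpha\phi\,\partial^\alpha\partial_\tau\phi\,dx$. The integrand then collapses to $\partial^\alpha\partial_\tau\phi\,(\partial_\tau^2\partial^\alpha\phi - \Delta\partial^\alpha\phi) = \partial^\alpha\partial_\tau\phi\,\partial^\alpha F$. Applying the Cauchy--Schwarz inequality in $x$ and then in the $\alpha$-sum, and noting from (\ref{energy}) that $\sum_{|\alpha|\le k}\int(\partial^\alpha\partial_\tau\phi)^2\,dx \le 2\,\mathcal{H}_k[\phi]$, one obtains
\begin{equation}
\left|\frac{d}{d\tau}\mathcal{H}_k[\phi]\right| \le \sqrt{2}\,\mathcal{H}_k^{1/2}[\phi]\left(\sum_{|\alpha|\le k}\int_{\lR^{D-1}}[\partial^\alpha F(\phi,\partial_\tau\phi)]^2\,dx\right)^{1/2} \le C\,\mathcal{H}_k^{1/2}[\phi], \nonumber
\end{equation}
where the last step uses Lemma~\ref{nonlinearbound}, whose hypothesis $|\phi(\tau,\cdot)|_k\le C$ and whose condition $k>(D-1)/2$ are exactly the assumptions available here. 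Dividing by $2\mathcal{H}_k^{1/2}[\phi]$ gives $\frac{d}{d\tau}\mathcal{H}_k^{1/2}[\phi] \le \frac{1}{2}C$ after renaming the generic constant, and integrating this over $[\tau_0,\tau]$ yields $\mathcal{H}_k^{1/2}[\phi](\tau) \le \mathcal{H}_k^{1/2}[\phi](\tau_0) + \frac{1}{2}C(\tau-\tau_0) \le \mathcal{H}_k^{1/2}[\phi](\tau_0) + \frac{1}{2}CT$, which is the claim.

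The steps above are routine except for one technical point that I expect to be the main obstacle: passing from the bound on $\frac{d}{d\tau}\mathcal{H}_k$ to the bound on $\frac{d}{d\tau}\mathcal{H}_k^{1/2}$ is not valid where $\mathcal{H}_k[\phi]$ vanishes. I would handle this in the standard way, by estimating $\frac{d}{d\tau}(\mathcal{H}_k[\phi]+\epsilon)^{1/2} = \tfrac12(\mathcal{H}_k[\phi]+\epsilon)^{-1/2}\frac{d}{d\tau}\mathcal{H}_k[\phi] \le \tfrac12 C$ uniformly in $\epsilon>0$ (using $\mathcal{H}_k^{1/2}[\phi]\le(\mathcal{H}_k[\phi]+\epsilon)^{1/2}$), integrating, and letting $\epsilon\to 0^+$. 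Relatedly, the differentiation under the integral sign and the integrations by parts tacitly require $\phi$ to be regular enough that $\tau\mapsto\mathcal{H}_k[\phi](\tau)$ is absolutely continuous and $\partial^\alpha\phi,\partial^\alpha\partial_\tau\phi$ decay suitably; at this stage the estimate should be understood as an a priori bound for sufficiently smooth solutions, with the regularity supplied a posteriori by the local existence theory of Section~\ref{sec:LocalEx}.
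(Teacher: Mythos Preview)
Your proof is correct and follows essentially the same approach as the paper: differentiate $\mathcal{H}_k[\phi]$, integrate by parts to produce $\sum_{|\alpha|\le k}\int\partial^\alpha F\,\partial^\alpha\partial_\tau\phi\,dx$, apply Cauchy--Schwarz and Lemma~\ref{nonlinearbound}, then divide by $\mathcal{H}_k^{1/2}$ and integrate. You are in fact more careful than the paper about the division-by-zero and regularity issues, which the paper simply passes over.
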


\begin{proof}
	Let us consider,
	\begin{eqnarray}
	\frac{d\mathcal{H}_k[\phi]}{d\tau} & = & \frac{1}{2}\sum_{|\alpha|\leq k}\int_{\lR^{D-1}}\: \left[2(\partial^\alpha\partial_\tau\phi)(\partial^\alpha\partial_\tau^2\phi)+2(\nabla\partial^\alpha \phi)\cdot(\nabla\partial^\alpha\partial_\tau\phi)\right]dx\nonumber\\
	& = &  \sum_{|\alpha|\leq k} \int_{\lR^{D-1}} \partial^\alpha F(\phi,\partial_\tau \phi)\; \partial^\alpha\partial_\tau \phi \:dx \:.
	\end{eqnarray}
	Using Schwartz and H\"older inequalities, we obtain
	\begin{equation}
	\frac{d\mathcal{H}_k[\phi]}{d\tau} \leq \left(\sum_{|\alpha|\leq k}^{} \int_{\lR^{D-1}} [\partial^\alpha F(\phi,\partial_\tau \phi)]^2 dx\right)^{1/2}\left(\sum_{|\alpha|\leq k}^{} \int_{\lR^{D-1}} (\partial^\alpha\partial_\tau \phi)^2dx\right)^{1/2}\:.
	\end{equation}
	The second factor can be estimated by a constant times $\mathcal{H}_k^{1/2}[\phi]$. Since $\phi$ satisfies the hypothesis of Lemma \ref{nonlinearbound}, then the first factor is bounded to a constant. Thus, we obtain
	\begin{equation}
	\frac{d\mathcal{H}_k[\phi]}{d\tau} \leq C \mathcal{H}_k^{1/2}[\phi]\:.
	\end{equation}
	Since $\mathcal{H}_k[\phi]$ is positive, we can divide the inequality by $\mathcal{H}_k^{1/2}[\phi]$ and integrate to obtain
	\begin{equation}
	\mathcal{H}_k^{1/2}[\phi](\tau) \leq \mathcal{H}_k^{1/2}[\phi](\tau_0) + \frac{1}{2}CT\:, 
	\end{equation}
	and the proof is finished.
\end{proof}

A classical solution of a scalar field on spatially flat spacetimes with non-minimal coupling is a real smooth function $\phi$ satisfying Equation (\ref{iPDE}). Then, a generalized solution is a real function $\phi \in C^0\left([\tau_0,\tau_0 + T], H^{k+1}(\lR^{D-1})\right) \cap C^1 \left([\tau_0,\tau_0 + T], H^{k}(\lR^{D-1})\right)$ such that Equation (\ref{iPDE}) is satisfied in a distributional sense. In the rest of this paper, we will prove the existence and uniqueness of both generalized and classical solutions to Equation (\ref{iPDE}) together with the initial data
\begin{eqnarray}
\phi(\tau_{0},x) & = & f(x)\:, \nonumber\\
\partial_{\tau}\phi(\tau_{0},x) & = & g(x)\:,
\end{eqnarray} 
with $f \in H^{k+1}(\lR^{D-1})$ and $g \in H^{k}(\lR^{D-1})$ and having compact support.

\section{Local Existence, Uniqueness, and Smoothness}
\label{sec:LocalEx}

In this section, we prove the local existence and the uniqueness of the solution of the scalar field Equation (\ref{iPDE}).

\subsection{Local Existence and Uniqueness}
The scalar field equation in spatially flat universe is given by,
\begin{equation}
\begin{cases}
\partial_{\tau}^2 \phi  - \Delta\phi = F(\phi,\partial_\tau\phi) \\
\phi(\tau_{0},\cdot)  =  f \in H^{k+1}(\lR^{D-1}) \\
\partial_{\tau}\phi(\tau_{0},\cdot)  =  g \in H^{k}(\lR^{D-1}) \:,
\end{cases} \label{MainEq}
\end{equation}
where the nonlinear term is given by Equation (\ref{nonlinearterm}). Let us consider the sequence $\{\phi_{l}\}$ such that
\begin{equation}
\begin{cases}
\partial_{\tau}^2 \phi_{0}  - \Delta\phi_{0} = 0 \\
\phi_{0}(\tau_{0},\cdot)  =  f_{0} \\
\partial_{\tau}\phi_0(\tau_{0},\cdot)  =  g_{0} \:,
\end{cases}
\label{zerotheq}
\end{equation}
and for $l\geq 0$,
\begin{equation}
\begin{cases}
\partial_{\tau}^2 \phi_{l+1}  - \Delta\phi_{l+1} = F(\phi_{l},\partial_\tau\phi_{l}) \\
\phi_{l}(\tau_{0},\cdot)  =  f_{l} \\
\partial_{\tau}\phi_{l}(\tau_{0},\cdot)  =  g_{l} \:.
\end{cases}
\label{higheq}
\end{equation}
Since the Schwartz space $\mathcal{S}(\lR^{D-1})$ is dense in $H^{k}(\lR^{D-1})$, then we can choose the sequences $\{f_{l}\}$ and $\{g_{l}\}$ such that $f_l,g_l \in \mathcal{S}(\lR^{D-1})$ and $f_l \rightarrow f$ in $H^{k+1}(\lR^{D-1})$ and $g_l \rightarrow g$ in $H^{k}(\lR^{D-1})$. Without loss of generality, we can choose $f_{l}$ and $g_{l}$ such that,
\begin{eqnarray}
\|f_{l}\|_{H^{k+1}(\lR^{D-1})} & \leq & 2\|f\|_{H^{k+1}(\lR^{D-1})} \nonumber\\
\|g_{l}\|_{H^{k}(\lR^{D-1})} & \leq & 2\|g\|_{H^{k}(\lR^{D-1})}\:.
\end{eqnarray}
In the following lemma, we prove that $\mathcal{H}_{k}[\phi_l]$ is bounded for all $l$. 
\begin{lemma}
	\label{iterationboundenergy}
	Let $\{\phi_l\}$ be solutions of (\ref{zerotheq}) and (\ref{higheq}). Let $\mathcal{H}_{k}$ be the linear energy defined in Equation (\ref{energy}). For $k>(D-1)/2$, there exist constants $C$ and $T$ such that,
	\begin{equation}
	\mathcal{H}_{k}[\phi_l](\tau) \leq C\:, \label{iterationenergybound}
	\end{equation} 
	for all $l \geq 0$ and $\tau \in [\tau_0,\tau_0 + T]$ . The constant $C$ depends on the initial data, $k$ and the bound of the scalar potential.
\end{lemma}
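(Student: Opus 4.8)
The plan is to prove (\ref{iterationenergybound}) by induction on $l$, organised as a bootstrap (continuity) argument so that the constants $C$ and $T$ can be fixed once and for all, independently of $l$. Write $M_0$ for a common bound on the initial energies: since $\|f_l\|_{H^{k+1}}\le 2\|f\|_{H^{k+1}}$ and $\|g_l\|_{H^{k}}\le 2\|g\|_{H^{k}}$, one has $\mathcal{H}_k[\phi_l](\tau_0)\le\tfrac12\big((2\|g\|_{H^{k}})^2+(2\|f\|_{H^{k+1}})^2\big)=:M_0$ for every $l$. The target is to show that all iterates stay inside a fixed ball for the energy norm, namely $|\phi_l(\tau,\cdot)|_k\le N$ on $[\tau_0,\tau_0+T]$, where $N$ is chosen (depending only on the initial data) with some slack relative to $2\|f\|_{L^2}+\sqrt{2M_0}$, and then to read off (\ref{iterationenergybound}) with, e.g., $C=(\sqrt{M_0}+1)^2$.

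\textbf{Base case.} For $l=0$, $\phi_0$ solves the homogeneous wave equation (\ref{zerotheq}); repeating the computation in the proof of Lemma \ref{energybound} with $F=0$ and integrating by parts shows $\mathcal{H}_k[\phi_0]$ is conserved, so $\mathcal{H}_k[\phi_0](\tau)=\mathcal{H}_k[\phi_0](\tau_0)\le M_0$ for all $\tau$. The $L^2$ part of the energy norm, which $\mathcal{H}_k$ does not see, is recovered by integrating $\partial_\tau\phi_0$ in time: $\|\phi_0(\tau,\cdot)\|_{L^2}\le\|f_0\|_{L^2}+\int_{\tau_0}^{\tau}\|\partial_\tau\phi_0(s,\cdot)\|_{L^2}\,ds\le 2\|f\|_{L^2}+\sqrt{2M_0}\,T$, whence $|\phi_0(\tau,\cdot)|_k\le N$ once $T$ is small.

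\textbf{Inductive step.} Assume $|\phi_l(\tau,\cdot)|_k\le N$ on $[\tau_0,\tau_0+T]$. By Lemma \ref{nonlinearbound}, $\big(\sum_{|\alpha|\le k}\int[\partial^\alpha F(\phi_l,\partial_\tau\phi_l)]^2\,dx\big)^{1/2}\le K$, where $K=K(N,T,k,V)$ is increasing in $N$ and $T$. Running the computation in the proof of Lemma \ref{energybound} for $\phi_{l+1}$, which solves (\ref{higheq}), gives $\tfrac{d}{d\tau}\mathcal{H}_k[\phi_{l+1}]=\sum_{|\alpha|\le k}\int\partial^\alpha F(\phi_l,\partial_\tau\phi_l)\,\partial^\alpha\partial_\tau\phi_{l+1}\,dx$, and then Cauchy--Schwarz and Hölder yield $\tfrac{d}{d\tau}\mathcal{H}_k^{1/2}[\phi_{l+1}]\le\tilde C K$; integration from $\tau_0$ gives $\mathcal{H}_k^{1/2}[\phi_{l+1}](\tau)\le\sqrt{M_0}+\tilde C K T$. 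As in the base case, $\|\phi_{l+1}(\tau,\cdot)\|_{L^2}\le 2\|f\|_{L^2}+\sqrt2\,T\big(\sqrt{M_0}+\tilde C K T\big)$. Using $\|\phi_{l+1}\|_{H^{k+1}}\le c_1\big(\|\phi_{l+1}\|_{L^2}+\mathcal{H}_k^{1/2}[\phi_{l+1}]\big)$ and $\|\partial_\tau\phi_{l+1}\|_{H^{k}}\le c_2\,\mathcal{H}_k^{1/2}[\phi_{l+1}]$, one obtains a bound on $|\phi_{l+1}(\tau,\cdot)|_k$ of the shape $c_3\big(2\|f\|_{L^2}+\sqrt{M_0}\big)+c_4\,T\,(1+KT)$. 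Now fix $N:=2c_3\big(2\|f\|_{L^2}+\sqrt{M_0}\big)+1$, which depends only on the data, and then choose $T$ small enough (depending on $N$, hence only on the data, through $K=K(N,T,k,V)$) that the remaining $T$-dependent terms are $\le 1$; this forces $|\phi_{l+1}(\tau,\cdot)|_k\le N$, closing the induction. With this $T$ we get $\mathcal{H}_k^{1/2}[\phi_l](\tau)\le\sqrt{M_0}+1$ for every $l$, which is (\ref{iterationenergybound}).

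\textbf{Main obstacle.} The delicate point is the uniformity in $l$: the constant $K$ furnished by Lemma \ref{nonlinearbound} depends on the a priori bound $N$, so one must select $N$ first (from the initial data alone, with slack) and only afterwards shrink $T$ — reversing this order would make $T$ depend on $l$ and the argument would not close. A secondary technical issue is that $\mathcal{H}_k$ controls only $\|\nabla\phi\|_{H^{k}}$ and $\|\partial_\tau\phi\|_{H^{k}}$, not $\|\phi\|_{L^2}$, which is why the $L^2$ part of the energy norm must be reconstructed by integrating $\partial_\tau\phi_l$ in $\tau$; this does not reintroduce $l$-dependence precisely because $\|f_l\|_{L^2}\le 2\|f\|_{L^2}$. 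Finally, one should record that the iterates $\phi_l$ are smooth with spatially compact support (Schwartz initial data together with finite propagation speed for the linear wave equation solved at each stage), so the differentiations under the integral sign, the integrations by parts, and the use of the Sobolev embedding in Lemma \ref{nonlinearbound} are all justified.
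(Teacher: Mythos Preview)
Your proof is correct and follows essentially the same strategy as the paper: induction on $l$, conservation of $\mathcal{H}_k[\phi_0]$ for the base case, and for the inductive step differentiating $\mathcal{H}_k[\phi_{l+1}]$ in $\tau$, bounding the source via Lemma~\ref{nonlinearbound} applied to $\phi_l$, integrating, and separately recovering the $L^2$-piece of the energy norm by integrating $\partial_\tau\phi$ in time. Your explicit bootstrap formulation (fix the target bound $N$ from the data first, then shrink $T$) is in fact more careful than the paper's own somewhat terse treatment about keeping the constants uniform in $l$.
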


\begin{proof}
	We will prove the lemma by induction. Since $\phi_{0}$ is a solution of linear wave equation, then $\mathcal{H}_{k}$ is conserved. Thus,
	\begin{equation}
	\mathcal{H}_{k}[\phi_0](\tau_0) = \mathcal{H}_{k}[\phi_0](\tau) \leq \tilde{C}\:, \label{iterationenergyinitial}
	\end{equation}
	where the bound constant $\tilde{C}$ depends only on the initial data such that,
	\begin{equation}
	\mathcal{H}_{k}[\phi_l](\tau_0) \leq \tilde{C}\:.
	\end{equation}
	Hence, Equation (\ref{iterationenergybound}) is satisfied for $l=0$.	
	
	Now, we assume  Equation (\ref{iterationenergybound}) to be true for $l=n$. Then, we obtain
	\begin{eqnarray}
	\frac{\partial}{\partial \tau}\int_{\lR^{D-1}}\:|\phi_{n}|^2\:dx & = & 2 \int_{\lR^{D-1}}\:\phi_{n}\;\partial_{\tau}\phi_{n}\:dx \nonumber \\
	& \leq & 2 \|\phi_{n}\|_{L^{2}(\lR^{D-1})} \|\partial_{\tau}\phi_{n}\|_{L^{2}(\lR^{D-1})} \leq C \|\phi_{n}\|_{L^{2}(\lR^{D-1})}\mathcal{H}^{1/2}_{k}[\phi_n]\:,
	\end{eqnarray}
	where we have  used H\"older's inequality and the definition of $\mathcal{H}_{k}$ in Equation (\ref{energy}). Integrating the inequality and using the induction hypothesis, we obtain
	\begin{equation}
	\|\phi_{n}(\tau,\cdot)\|_{L^{2}(\lR^{D-1})} \leq C \left( \|f\|_{L^{2}(\lR^{D-1})} + T \right)\:.
	\end{equation}
	If we assume $T \leq 1$, then $\|\phi_{n}(\tau,\cdot)\|_{L^{2}(\lR^{D-1})}$ is bounded for $\tau \in [\tau_0,\tau_0 + T]$, which depends only on the initial data.
	
	Using Sobolev embedding theorem for $k>(D-1)/2$, we have bounds on $\sum_{j=1}^{D-1} \|\partial_j\phi_l(\tau,\cdot)\|_{C_b(\lR^{D-1})}$ and $\|\partial_t\phi_l(\tau,\cdot)\|_{C_b(\lR^{D-1})}$. Furthermore, we also have
	\begin{equation}
	|\phi_n(\tau,\cdot)|_{k} \leq C\:, \label{phinbound}
	\end{equation}
	where the constant only depends on initial data. Hence, $\phi_n$ satisfies the hypothesis of Lemmas \ref{nonlinearbound} and \ref{energybound}.
	
	For $l=n+1$, we have
	\begin{eqnarray}
	\frac{d\mathcal{H}_k[\phi_{n+1}]}{d\tau} & = &  \sum_{|\alpha|\leq k} \int_{\lR^{D-1}} \partial^\alpha F(\phi_n,\partial_\tau \phi_n)\; \partial^\alpha\partial_\tau \phi_{n+1} \:dx \nonumber \\
	& \leq &  \left(\sum_{|\alpha|\leq k} \int_{\lR^{D-1}} [\partial^\alpha F(\phi_n,\partial_\tau \phi_n)]^2 dx\right)^{1/2}\left(\sum_{|\alpha|\leq k} \int_{\lR^{D-1}} (\partial^\alpha\partial_\tau \phi_{n+1})^2dx\right)^{1/2} \nonumber\\
	& \leq & C \mathcal{H}^{1/2}_k[\phi_{n+1}]\:,
	\end{eqnarray}
	where we used Lemma \ref{nonlinearbound} in the last inequality. Integrating the inequality, we obtain
	\begin{equation}
	\mathcal{H}^{1/2}_k[\phi_{n+1}](\tau) \leq \mathcal{H}^{1/2}_k[\phi_{n+1}](\tau_0) + C T\:.
	\end{equation}
	Using Equation (\ref{iterationenergyinitial}) and assuming $T \leq 1$, then $\mathcal{H}_k[\phi_{n+1}](\tau)$ is bounded for $\tau \in [\tau_0,\tau_0 + T]$.

\end{proof}

Let us define 
\begin{equation}
E_{l,k} = \sup_{\tau \in [\tau_0,\tau_0 + T]}\left[\mathcal{H}^{1/2}_{k}[\phi_l - \phi_{l-1}](\tau) + \|(\phi_{l}-\phi_{l-1})(\tau,\cdot)\|_{L^{2}(\lR^{D-1})} \right]\:. \label{El}
\end{equation}
Next, we derive an estimate for the difference of consecutive sequence which is important to prove the convergence of the sequence.

\begin{lemma}
	\label{energyestimate}
	Let $\{\phi_l\}$ be solutions of (\ref{zerotheq}) and (\ref{higheq}). Let $\mathcal{H}_{k}$ is the linear energy defined in Equation (\ref{energy}). Then, for $k>(D-1)/2$, we have the estimate
	\begin{equation}
	\mathcal{H}^{1/2}_{k}[\phi_l - \phi_{l-1}](\tau) \leq \mathcal{H}^{1/2}_{k}[\phi_l - \phi_{l-1}](\tau_0) + \frac{1}{2} C E_{l,k} T\:,
	\end{equation}
	for all $\tau \in [\tau_0,\tau_0 + T]$.
\end{lemma}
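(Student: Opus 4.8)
The plan is to run the energy argument of Lemma~\ref{energybound} on the difference $w := \phi_l - \phi_{l-1}$, and to control the forcing term that appears by $C\,E_{l,k}$ with the help of the uniform bounds of Lemma~\ref{iterationboundenergy}. Subtracting the equations in (\ref{higheq}) (and (\ref{zerotheq}) for the first step), $w$ solves the linear wave equation $\partial_\tau^2 w - \Delta w = \mathcal{F}_l$, where $\mathcal{F}_l$ is the difference of the nonlinear source $F$ evaluated at the two preceding iterates. Differentiating $\mathcal{H}_k[w]$ in $\tau$, substituting $\partial^\alpha\partial_\tau^2 w = \Delta\partial^\alpha w + \partial^\alpha\mathcal{F}_l$, and integrating by parts in $x$ exactly as in the proof of Lemma~\ref{energybound}, the top-order terms cancel and one is left with
\begin{equation}
\frac{d\mathcal{H}_k[w]}{d\tau} = \sum_{|\alpha|\le k}\int_{\lR^{D-1}}\partial^\alpha\mathcal{F}_l\;\partial^\alpha\partial_\tau w\,dx \le \left(\sum_{|\alpha|\le k}\int_{\lR^{D-1}}(\partial^\alpha\mathcal{F}_l)^2\,dx\right)^{1/2}\left(\sum_{|\alpha|\le k}\int_{\lR^{D-1}}(\partial^\alpha\partial_\tau w)^2\,dx\right)^{1/2},
\end{equation}
by the Cauchy--Schwarz and H\"older inequalities, and the second factor is bounded by a constant times $\mathcal{H}_k^{1/2}[w]$.

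The heart of the argument is the estimate $\|\mathcal{F}_l\|_{H^k(\lR^{D-1})}\le C\,E_{l,k}$, with $C$ independent of $l$. Writing $v$ for the difference of the two iterates entering $\mathcal{F}_l$, the part of $F$ that is affine in $(\phi,\partial_\tau\phi)$ contributes $-(D-2)H\,\partial_\tau v - \xi(D-1)\big(2\dot{H}+(D-2)H^2\big)v$; since $a$, $H$ and $\dot{H}$ are regular on the compact interval $[\tau_0,\tau_0+T]$, the $H^k$-norm of this part is at most $C\big(\|\partial_\tau v\|_{H^k}+\|v\|_{H^k}\big)\le C\big(\mathcal{H}_k^{1/2}[v]+\|v\|_{L^2}\big)$ --- the zeroth-order coupling term is the reason the full $L^2$-norm of $v$ (not just the homogeneous energy) is needed, which is exactly why $E_{l,k}$ carries the extra $\|\cdot\|_{L^2}$ term. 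For the potential contribution $a^2\big(\partial_\phi V(\phi_{l-1})-\partial_\phi V(\phi_{l-2})\big)$ I would use the integral mean value identity $\partial_\phi V(\phi_{l-1})-\partial_\phi V(\phi_{l-2})=v\int_0^1\partial_\phi^2 V\big(\phi_{l-2}+s v\big)\,ds$ together with Moser-type product and composition estimates: because $k>(D-1)/2$, $H^k(\lR^{D-1})$ is a Banach algebra that embeds in $C_b$, so by the chain rule, the smoothness of $V$, and the uniform $H^{k+1}$-bound (\ref{phinbound}) on the iterates (Lemma~\ref{iterationboundenergy}), the coefficient $\int_0^1\partial_\phi^2 V(\phi_{l-2}+s v)\,ds$ and its spatial derivatives up to order $k$ are bounded in $L^2$ uniformly in $l$, whence $\big\|a^2(\partial_\phi V(\phi_{l-1})-\partial_\phi V(\phi_{l-2}))\big\|_{H^k}\le C\|v\|_{H^k}$. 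Summing the two contributions gives $\|\mathcal{F}_l\|_{H^k}\le C\big(\mathcal{H}_k^{1/2}[v]+\|v\|_{L^2}\big)\le C\,E_{l,k}$.

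Inserting this bound into the differential inequality yields $d\mathcal{H}_k[w]/d\tau\le C\,E_{l,k}\,\mathcal{H}_k^{1/2}[w]$; dividing by $\mathcal{H}_k^{1/2}[w]$ where it is positive (with the standard regularization at its zeros, as in Lemma~\ref{energybound}) and integrating over $[\tau_0,\tau]$ gives $\mathcal{H}_k^{1/2}[w](\tau)\le\mathcal{H}_k^{1/2}[w](\tau_0)+\frac{1}{2}C\,E_{l,k}(\tau-\tau_0)\le\mathcal{H}_k^{1/2}[w](\tau_0)+\frac{1}{2}C\,E_{l,k}T$, which is the claim; the base step, where the forcing reduces to $F(\phi_0,\partial_\tau\phi_0)$, is handled directly by Lemma~\ref{nonlinearbound} applied to $\phi_0$. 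I expect the one genuinely delicate point to be the composition estimate for the potential difference: one must expand $\partial^\alpha\big[\partial_\phi^2 V(\phi_{l-2}+s v)\big]$ for $|\alpha|\le k$ by the Fa\`a di Bruno formula and control each resulting product of lower-order derivatives in $L^2$ through repeated use of the Sobolev algebra property and the uniform bound (\ref{phinbound}) --- this, and the Sobolev embedding needed to place $\partial_\phi^2 V(\phi_{l-2}+s v)$ in $C_b$, is precisely where the hypothesis $k>(D-1)/2$ enters. The remaining, affine terms are routine, since all geometric coefficients are smooth and bounded on the finite time interval.
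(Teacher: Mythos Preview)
Your proof follows the same route as the paper's: differentiate $\mathcal{H}_k$ of the difference, bound the forcing in $H^k$ by splitting it into the affine part (controlled by $\|v\|_{H^k}+\|\partial_\tau v\|_{H^k}$ via the boundedness of $a,H,\dot H$ on $[\tau_0,\tau_0+T]$) and the potential part (handled through the integral mean-value identity together with the uniform bounds of Lemma~\ref{iterationboundenergy} and the $H^k$ algebra property for $k>(D-1)/2$), and then integrate. The only slip---which the paper's own argument shares---is the index: since the right-hand side of the equation for $w=\phi_l-\phi_{l-1}$ is $F(\phi_{l-1},\partial_\tau\phi_{l-1})-F(\phi_{l-2},\partial_\tau\phi_{l-2})$, your quantity $\mathcal{H}_k^{1/2}[v]+\|v\|_{L^2}$ with $v=\phi_{l-1}-\phi_{l-2}$ is bounded by $E_{l-1,k}$ rather than $E_{l,k}$, and it is precisely this shifted version that is invoked in the proof of Lemma~\ref{lemmaboundEl}.
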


\begin{proof}
	First, we have the estimate
	\begin{eqnarray}
	\|F(\phi_l,\partial_{\phi}\tau \phi_l) - F(\phi_{l-1},\partial_{\phi}\tau \phi_{l-1})\|_{H^{k}(\lR^{D-1})} & \leq & C\left[\|\phi_l - \phi_{l-1}\|_{H^{k}(\lR^{D-1})} + \|\partial_{\tau}\phi_l - \partial_{\tau}\phi_{l-1}\|_{H^{k}(\lR^{D-1})} \right. \nonumber \\
	& & \qquad \left. + \|\partial_{\phi}V(\phi_l) - \partial_{\phi}V(\phi_{l-1})\|_{H^{k}(\lR^{D-1})} \right]\:.
	\end{eqnarray}
	We can estimate the first and second terms as
	\begin{eqnarray}
	\|\phi_l - \phi_{l-1}\|_{H^{k}(\lR^{D-1})} & \leq & \|\phi_l - \phi_{l-1}\|_{L^{2}(\lR^{D-1})} + \mathcal{H}^{1/2}_{k}[\phi_l - \phi_{l-1}]\nonumber \\
	\|\partial_{\tau}\phi_l - \partial_{\tau}\phi_{l-1}\|_{H^{k}(\lR^{D-1})} & \leq & \mathcal{H}^{1/2}_{k}[\phi_l - \phi_{l-1}]\:.
	\end{eqnarray}
	To estimate the third term, we write
	\begin{equation}
	\partial_\phi V(\phi_l) - \partial_\phi V(\phi_{l-1}) = \int_{0}^{1} \partial_\phi^2 V[\sigma\phi_l + (1-\sigma)\phi_{l-1}]d\sigma \;\left(\phi_{l}-\phi_{l-1}\right)\:.
	\end{equation}
	 Using Lemma \ref{iterationboundenergy} and the fact that the scalar potential is a smooth function, for $k>(D-1)/2$ we obtain
	\begin{eqnarray}
	\|\partial_{\phi}V(\phi_l) - \partial_{\phi}V(\phi_{l-1})\|_{H^{k}(\lR^{D-1})} & \leq &  C \|\phi_{l}-\phi_{l-1}\|_{H^{k}(\lR^{D-1})} \nonumber\\
	& \leq & C\left(\|\phi_{l}-\phi_{l-1}\|_{L^{2}(\lR^{D-1})} + \mathcal{H}^{1/2}_{k}[\phi_l - \phi_{l-1}]\right)\:,\nonumber\\
	\end{eqnarray}
	where the constant $C$ depends on the bound of the scalar potential. Hence, we obtain the estimate
	\begin{equation}
	\|F(\phi_l,\partial_{\phi}\tau \phi_l) - F(\phi_{l-1},\partial_{\phi}\tau \phi_{l-1})\|_{H^{k}(\lR^{D-1})} \leq C\left(\|\phi_{l}-\phi_{l-1}\|_{L^{2}(\lR^{D-1})} + \mathcal{H}^{1/2}_{k}[\phi_l - \phi_{l-1}]\right) \leq C E_{l,k}\:.
	\end{equation}
	
	Now, similar to the proof of Lemma \ref{iterationboundenergy}, we have
	\begin{eqnarray}
	\frac{d}{d\tau}\mathcal{H}_k[\phi_{l}-\phi_{l-1}] & \leq & C \|F(\phi_l,\partial_\tau \phi_l) - F(\phi_{l-1},\partial_\tau \phi_{l-1})\|_{H^{k}(\lR^{D-1})}\;\|\partial_{\tau}\phi_l - \partial_{\tau}\phi_{l-1}\|_{H^{k}(\lR^{D-1})} \nonumber \\
	& \leq & CE_{l,k} \mathcal{H}^{1/2}_{k}[\phi_l - \phi_{l-1}]\:.
	\end{eqnarray}
	Integrating the inequality, we obtain
	\begin{equation}
	\mathcal{H}^{1/2}_{k}[\phi_l - \phi_{l-1}](\tau) \leq \mathcal{H}^{1/2}_{k}[\phi_l - \phi_{l-1}](\tau_0) + \frac{1}{2} C E_{l,k} T\:,
	\end{equation}
	for all $\tau \in [\tau_0,\tau_0 + T]$ and the proof is finished.
	
\end{proof}

Next, we prove the estimate of $E_{l,k}$.

\begin{lemma}
	\label{lemmaboundEl}
	Let $\{\phi_l\}$ be solutions of (\ref{zerotheq}) and (\ref{higheq}). Let $E_{l,k}$ be given by Equation (\ref{El}). For $k>(D-1)/2$, there exist constants $C_0>1$ and $T$ such that,
	\begin{equation}
	E_{l,k} \leq \frac{C_0}{2^{l}}\:, \label{boundEl}
	\end{equation} 
	for all $l$.
\end{lemma}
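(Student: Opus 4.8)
The plan is to prove (\ref{boundEl}) by induction on $l$, combining the difference estimate of Lemma \ref{energyestimate} with an $L^2$‑estimate for $\phi_l-\phi_{l-1}$ obtained exactly as in the proof of Lemma \ref{iterationboundenergy}, and then shrinking the existence time $T$ so that the resulting recursion becomes a geometric contraction. Throughout, $k>(D-1)/2$ is fixed and all constants must be tracked so that they remain independent of $l$; this is guaranteed by the uniform bound $\mathcal{H}_k[\phi_l](\tau)\le C$ of Lemma \ref{iterationboundenergy} (and the Sobolev embedding it rests on), which makes the Lipschitz‑type constant appearing in Lemma \ref{energyestimate} uniform in $l$.

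\emph{Base case.} First I would check that $E_{1,k}$ is finite. By Lemma \ref{iterationboundenergy}, $\mathcal{H}_k[\phi_0]$ and $\mathcal{H}_k[\phi_1]$ are bounded on $[\tau_0,\tau_0+T]$ by a constant depending only on the data, $k$ and $V$, so the triangle inequality for the energy seminorm gives $\mathcal{H}_k^{1/2}[\phi_1-\phi_0]\le \mathcal{H}_k^{1/2}[\phi_1]+\mathcal{H}_k^{1/2}[\phi_0]\le C$, while $\sup_\tau\|\phi_1-\phi_0\|_{L^2}$ is controlled in the same way as the $L^2$ bound inside the proof of Lemma \ref{iterationboundenergy}. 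Hence $E_{1,k}\le C_1$ for some finite $C_1$, and I will require $C_0\ge 2C_1$ (and $C_0>1$). At this point one also refines the density argument so that the approximating data satisfy $\|f_l-f\|_{H^{k+1}(\lR^{D-1})}+\|g_l-g\|_{H^{k}(\lR^{D-1})}\le 2^{-l}$, which forces the consecutive differences $f_{l+1}-f_l$, $g_{l+1}-g_l$ — hence $\mathcal{H}_k^{1/2}[\phi_{l+1}-\phi_l](\tau_0)+\|f_{l+1}-f_l\|_{L^2(\lR^{D-1})}$ — to be $O(2^{-l})$.

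\emph{Inductive step.} Assume $E_{l,k}\le C_0 2^{-l}$. Since $\phi_{l+1}-\phi_l$ solves the linear wave equation with source $F(\phi_l,\partial_\tau\phi_l)-F(\phi_{l-1},\partial_\tau\phi_{l-1})$ and data $f_{l+1}-f_l$, $g_{l+1}-g_l$, the difference estimate of Lemma \ref{energyestimate} (applied at this level, so that the source is controlled by $C\,E_{l,k}$) gives
\[
\sup_{\tau\in[\tau_0,\tau_0+T]}\mathcal{H}_k^{1/2}[\phi_{l+1}-\phi_l](\tau)\;\le\;\mathcal{H}_k^{1/2}[\phi_{l+1}-\phi_l](\tau_0)+\tfrac12 C T\, E_{l,k}\;\le\; C\,2^{-l}+\tfrac12 C T\, E_{l,k}.
\]
Arguing as in Lemma \ref{iterationboundenergy}, $\frac{d}{d\tau}\|\phi_{l+1}-\phi_l\|_{L^2(\lR^{D-1})}\le C\,\mathcal{H}_k^{1/2}[\phi_{l+1}-\phi_l]$, so integrating and using the previous line,
\[
\sup_\tau\|\phi_{l+1}-\phi_l\|_{L^2(\lR^{D-1})}(\tau)\;\le\;\|f_{l+1}-f_l\|_{L^2(\lR^{D-1})}+C T\sup_\tau\mathcal{H}_k^{1/2}[\phi_{l+1}-\phi_l](\tau).
\]
Adding the two displays yields a recursion of the form $E_{l+1,k}\le C_2\,2^{-l}+C_3 T\,E_{l,k}$ with $C_2,C_3$ independent of $l$. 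Now choose $T$ so small that $C_3T\le 1/4$; then $E_{l+1,k}\le C_2\,2^{-l}+\tfrac14 C_0\,2^{-l}=2^{-(l+1)}(2C_2+\tfrac12 C_0)$, which is $\le C_0\,2^{-(l+1)}$ as soon as $C_0\ge 4C_2$. Taking $C_0:=\max\{2,\,2C_1,\,4C_2\}$ closes both the base case and the inductive step, proving (\ref{boundEl}) for all $l$.

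\emph{Main obstacle.} The delicate point is the uniformity of all constants in $l$: the Lipschitz‑type constant in Lemma \ref{energyestimate} and the constant in the $L^2$ estimate must not grow with $l$, which is precisely what the $l$‑independent energy bound of Lemma \ref{iterationboundenergy} provides (via Sobolev embedding for $k>(D-1)/2$); and one must check that shrinking $T$ to get the contraction factor $C_3T\le 1/4$ is compatible with the smallness of $T$ already imposed ($T\le1$ and the $T$ coming from Lemma \ref{iterationboundenergy}) — it is, since $T$ is only decreased further. A secondary, purely technical point is arranging the approximating initial data so that consecutive differences decay geometrically, which costs nothing beyond a sharper application of the density of $\mathcal{S}(\lR^{D-1})$ in $H^{k+1}(\lR^{D-1})$ and $H^{k}(\lR^{D-1})$.
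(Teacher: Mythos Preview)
Your proposal is correct and follows essentially the same route as the paper: induction on $l$, with the energy difference controlled by Lemma \ref{energyestimate}, the $L^2$ difference controlled as in Lemma \ref{iterationboundenergy}, a geometric choice of the approximating data $f_l,g_l$, and a final shrinkage of $T$ to force the contraction $C_3T\le 1/4$. The only differences are cosmetic---you track the constants $C_0,C_1,C_2$ a bit more explicitly and emphasize the $l$-uniformity issue---but the structure and the key steps match the paper's proof.
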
 

\begin{proof}
	The lemma is true for $l=1$ by assuming that the constant $C_{0}$ is big enough. Let us consider
	\begin{eqnarray}
	\frac{\partial}{\partial \tau}\int_{\lR^{D-1}}\:|\phi_{l+1}-\phi_{l}|^2\:dx & = & 2 \int_{\lR^{D-1}}\:|\phi_{l+1}-\phi_{l}|\;\partial_{\tau}|\phi_{l+1}-\phi_{l}|\:dx \nonumber \\
	& \leq & 2 \|\phi_{l+1}-\phi_{l}\|_{L^{2}(\lR^{D-1})} \|\partial_{\tau}(\phi_{l+1}-\phi_{l})\|_{L^{2}(\lR^{D-1})} \nonumber \\ 
	& \leq & 2^{3/2} \|\phi_{l+1}-\phi_{l}\|_{L^{2}(\lR^{D-1})}\mathcal{H}^{1/2}_{k}[\phi_{l+1}-\phi_{l}]\:,
	\end{eqnarray}
	Integrating this inequality and since $k>(D-1)/2$, we obtain
	\begin{equation}
	\|\phi_{l+1}-\phi_{l}\|_{L^{2}(\lR^{D-1})}(\tau) \leq \|\phi_{l+1}-\phi_{l}\|_{L^{2}(\lR^{D-1})}(\tau_{0}) + 2^{1/2}\left|\int_{\tau_{0}}^{\tau}\mathcal{H}^{1/2}_{k}[\phi_{l+1}-\phi_{l}](s)\:ds\right|\:.
	\end{equation}
	Assuming $T<1/2$ and using Lemma \ref{energyestimate}, we have the estimate
	\begin{equation}
	E_{l+1,k} \leq \|\phi_{l+1}-\phi_{l}\|_{L^{2}(\lR^{D-1})}(\tau_{0}) + 2 \mathcal{H}^{1/2}_{k}[\phi_{l+1}-\phi_{l}](\tau_{0}) + C E_{l,k}T\:. \label{El+1}
	\end{equation}
	The first and second terms depend only on the initial data. However, we can choose them as such that,
	\begin{equation}
	\|\phi_{l+1}-\phi_{l}\|_{L^{2}(\lR^{D-1})}(\tau_{0}) + 2 \mathcal{H}^{1/2}_{k}[\phi_{l+1}-\phi_{l}](\tau_{0}) \leq \frac{1}{2^{l+2}}\:.
	\end{equation}
	Hence, by assuming $CT<1/4$, Equation (\ref{El+1}) and the induction hypothesis give
	\begin{equation}
	E_{l+1,k} \leq \frac{1}{2^{l+2}} + \frac{C_{0}}{2^{l+2}} \leq \frac{C_{0}}{2^{l+1}}\:.
	\end{equation}
	Thus, Equation (\ref{boundEl}) is true for all $l$ and the proof is finished.
	
\end{proof}

Let us consider, for all $|\alpha| \leq k+1$, 
\begin{eqnarray}
\|\nabla\partial^\alpha (\phi_{l}-\phi_{l-1})(\tau,\cdot)\|_{L^2(\lR^{D-1})} & = & \left(\int_{\lR^{D-1}}^{}\left|\nabla\partial^\alpha (\phi_{l}-\phi_{l-1})(\tau,\cdot)\right|^2 dx\right)^{1/2} \nonumber\\
& \leq & C \mathcal{H}_{k}[\phi_{l}-\phi_{l-1}]^{1/2}(\tau) \nonumber\\
& \leq & C E_{l,k}\:.
\end{eqnarray}
In other words, $\phi_{l}-\phi_{l-1} \in H^{k+1}(\lR^{D-1})$. However, using Lemma \ref{lemmaboundEl}, we have
\begin{equation}
\sup_{\tau \in [\tau_0,\tau_0 + T]} \|\phi_{l}-\phi_{l-1}\|_{H^{k+1}(\lR^{D-1})} \leq \frac{C_{0}}{2^{l}}\:,
\end{equation}
which show that $\{\phi_{l}\}$ is a Cauchy sequence on $C\left([\tau_0,\tau_0 + T], H^{k+1}(\lR^{D-1})\right)$. Using a similar method, we can show that $\{\partial_\tau \phi_l\}$ is also a Cauchy sequence on $C\left([\tau_0,\tau_0 + T],H^{k}(\lR^{D-1})\right)$. Thus, we have proven the existence of generalized solutions of Equation (\ref{iPDE}) such that
\begin{equation}
\phi \in C\left([\tau_0,\tau_0 + T], H^{k+1}(\lR^{D-1})\right) \cap C^1 \left([\tau_0,\tau_0 + T], H^{k}(\lR^{D-1})\right)\:.
\end{equation}

Furthermore, using Lemma \ref{lemmaboundEl}, we also get that $\{\partial_i\partial_j \phi_l\}$ and $\{\partial_i\partial_\tau \phi_l\}$ are Cauchy sequences on $C\left([\tau_0,\tau_0 + T],H^{k-1}(\lR^{D-1})\right)$. In fact, we have
\begin{eqnarray}
\|\partial_\tau^2(\phi_{l+1}-\phi_l)\|_{H^{k-1}(\lR^{D-1})} & \leq & \|F(\phi_{l},\partial_\tau\phi_{l})-F(\phi_{l-1},\partial_\tau\phi_{l-1}) \|_{H^{k-1}(\lR^{D-1})} \nonumber\\
& &\qquad + \|\partial_i\partial^i(\phi_{l+1}-\phi_l)\|_{H^{k-1}(\lR^{D-1})}\:.
\end{eqnarray}
On the right hand side, the first term is bounded by $E_{l,k}$ and the second term is bounded by a constant since it is a Cauchy sequence on $H^{k-1}(\lR^{D-1})$. Then using Lemma \ref{lemmaboundEl} , we have that $\{\partial_\tau^2\phi_{l}\}$ is also a Cauchy sequence in $H^{k-1}(\lR^{D-1})$. Thus, for $k>(D-1)/2$, there exists a real function $\phi$ such that,
\begin{equation}
\sup_{\tau \in [\tau_0,\tau_0 + T]}\|\partial^{2}_{\tau}\phi\|_{H^{k-1}(\lR^{D-1})} \leq C\:.
\end{equation}
The above inequality implies that $\partial_\tau^2\phi \in C\left([\tau_0,\tau_0 + T],H^{k-1}(\lR^{D-1})\right)$.

Let us consider a fixed point $(\tau,x) \in \mathcal{M}^D$. We make a sequance $(t_{l},x_{l}) \rightarrow (\tau,x)$, where $\tau_0 \leq \tau_{l},\tau \leq \tau_{0}+T$. Now, we have the estimate
\begin{equation}
\left|\partial_\tau^2\phi(\tau,x)-\partial_\tau^2\phi(\tau_{l},x_{l})\right| \leq \left|\partial_\tau^2\phi(\tau,x)-\partial_\tau^2\phi(\tau,x_{l})\right| + \left|\partial_\tau^2\phi(\tau,x_{l}) - \partial_\tau^2\phi(\tau_{l},x_{l})\right|\:.
\end{equation}
By Sobolev embedding theorem, for $k>(D-1)/2$ we have $\partial_\tau^2\phi$ to be a continuous function in $x$,  and thus, the first term on the right hand side vanishes as $l\rightarrow \infty$. The second term can be estimated by a constant times $\|\partial_\tau^2\phi(\tau,\cdot)-\partial_\tau^2\phi(\tau_{l},\cdot)\|_{H^{k-1}(\lR^{D-1})}$. Since $\partial_\tau^2\phi \in C\left([\tau_0,\tau_0 + T],H^{k-1}(\lR^{D-1})\right)$, then the second term  also goes to zero as $l\rightarrow \infty$. Hence, we conclude that, $\partial_\tau^2\phi \in C\left([\tau_0,\tau_0 + T]\times\lR^{D-1}\right)$, thus
\begin{equation}
\phi \in C^2\left([\tau_0,\tau_0 + T]\times\lR^{D-1}\right)\:.
\end{equation}

To show the uniqueness, consider $\phi,\phi'$ as solutions of Equation (\ref{MainEq}) with the same initial data. Similar to the proof of Lemma \ref{energyestimate}, we have the estimate
\begin{eqnarray}
\frac{d}{d\tau}\mathcal{H}_k[\phi'-\phi] & \leq & C \|F(\phi',\partial_\tau \phi') - F(\phi,\partial_\tau \phi)\|_{H^{k}(\lR^{D-1})}\;\|\partial_{\tau}\phi' - \partial_{\tau}\phi\|_{H^{k}(\lR^{D-1})} \nonumber\\
& \leq & C \mathcal{H}_k[\phi'-\phi]\:.
\end{eqnarray}
Using Gronwall lemma and the fact that $\phi,\phi'$ have the same initial data, then for all $\tau \in [\tau_0,\tau_0 + T]$, we conclude that $\phi'(\tau) = \phi(\tau)$ and the uniqueness follows.

Thus, we have proven,
\begin{theorem}
	\label{theoremexistence}
	Let $f \in H^{k+1}(\lR^{D-1})$ and $g \in H^{k}(\lR^{D-1})$ be initial data with compact support. Assume that the scalar potential is a smooth function satisfying $V(0) = 0$ and $\partial_{\phi}V(0) = 0$ and that $k>(D-1)/2$. Then, there exist $T>0$ and a unique $\phi \in C^2\left([\tau_0,\tau_0 + T]\times\lR^{D-1}\right)$ being a local solution to the Equation (\ref{iPDE}) such that
	\begin{equation}
	\phi \in C\left([\tau_0,\tau_0 + T], H^{k+1}(\lR^{D-1})\right) \cap C^1 \left([\tau_0,\tau_0 + T], H^{k}(\lR^{D-1})\right)\:.
	\end{equation}
\end{theorem}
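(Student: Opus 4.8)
The plan is to realize $\phi$ as the limit of the Picard-type iteration $\{\phi_l\}$ introduced in (\ref{zerotheq})--(\ref{higheq}), where each $\phi_{l+1}$ solves a \emph{linear} inhomogeneous wave equation with source $F(\phi_l,\partial_\tau\phi_l)$ and smooth, compactly supported data $f_l\to f$, $g_l\to g$. First I would apply Lemma \ref{iterationboundenergy} to fix $T\le 1$ small enough that $\mathcal{H}_k[\phi_l](\tau)\le C$ uniformly in $l$ and in $\tau\in[\tau_0,\tau_0+T]$; together with the $L^2$ bound on $\phi_l$ derived there (H\"older plus integration in $\tau$), this yields a uniform control $|\phi_l(\tau,\cdot)|_k\le C$, so that each iterate meets the hypotheses of Lemmas \ref{nonlinearbound} and \ref{energybound}. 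This uniform a priori bound is the first place a smallness restriction on $T$ is needed.

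Next I would establish contraction of the scheme. Combining Lemmas \ref{energyestimate} and \ref{lemmaboundEl}, and shrinking $T$ further so that $CT<1/4$ and $T<1/2$, one obtains the geometric decay $E_{l,k}\le C_0\,2^{-l}$, where $E_{l,k}$ dominates $\sup_\tau\big(\mathcal{H}_k^{1/2}[\phi_l-\phi_{l-1}]+\|\phi_l-\phi_{l-1}\|_{L^2(\lR^{D-1})}\big)$. Since this quantity is comparable to the norm of $(\phi_l-\phi_{l-1},\partial_\tau(\phi_l-\phi_{l-1}))$ in $H^{k+1}\times H^k$, summing the geometric series shows $\{\phi_l\}$ is Cauchy in $C([\tau_0,\tau_0+T],H^{k+1}(\lR^{D-1}))$ and $\{\partial_\tau\phi_l\}$ is Cauchy in $C([\tau_0,\tau_0+T],H^k(\lR^{D-1}))$. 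Calling the limit $\phi$ and using the local Lipschitz bound on $F$ in these norms (the estimate obtained in the proof of Lemma \ref{energyestimate}) to pass to the limit in the source term, $\phi$ solves (\ref{iPDE}) in the distributional sense; hence $\phi\in C([\tau_0,\tau_0+T],H^{k+1})\cap C^1([\tau_0,\tau_0+T],H^k)$ is a generalized solution with the claimed initial data.

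To upgrade to a classical solution I would bootstrap using the equation itself. Writing $\partial_\tau^2\phi=\Delta\phi+F(\phi,\partial_\tau\phi)$ and noting that $\Delta\phi\in C([\tau_0,\tau_0+T],H^{k-1})$ while $F(\phi,\partial_\tau\phi)\in C([\tau_0,\tau_0+T],H^{k-1})$ (Lemma \ref{nonlinearbound} applied to $\phi$, together with the Cauchy property of $\{\partial_i\partial_j\phi_l\}$ and $\{\partial_i\partial_\tau\phi_l\}$ in $H^{k-1}$), one gets $\partial_\tau^2\phi\in C([\tau_0,\tau_0+T],H^{k-1}(\lR^{D-1}))$. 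For $k$ above the Sobolev threshold the second-order quantities $\partial^2\phi$, $\partial_i\partial_\tau\phi$, $\partial_\tau^2\phi$ embed into $C_b(\lR^{D-1})$; splitting the increment $\partial_\tau^2\phi(\tau,x)-\partial_\tau^2\phi(\tau_l,x_l)$ into a spatial part (handled by the Sobolev embedding) and a temporal part (handled by continuity in time into $H^{k-1}$) gives joint continuity in $(\tau,x)$, and likewise for $\partial_i\partial_j\phi$ and $\partial_i\partial_\tau\phi$. Therefore $\phi\in C^2([\tau_0,\tau_0+T]\times\lR^{D-1})$.

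Uniqueness then follows by the energy method: for two solutions $\phi,\phi'$ of (\ref{MainEq}) with the same data, the Lipschitz estimate on $F$ gives $\frac{d}{d\tau}\mathcal{H}_k[\phi'-\phi]\le C\,\mathcal{H}_k[\phi'-\phi]$, so Gronwall's inequality with vanishing initial energy forces $\mathcal{H}_k[\phi'-\phi]\equiv 0$, and the accompanying $L^2$ estimate then gives $\phi'\equiv\phi$ on $[\tau_0,\tau_0+T]$. I expect the main obstacle to be not any single step but the bookkeeping: arranging the several competing smallness conditions on $T$ (the uniform a priori bound, the contraction, and $T\le 1$) so that one fixed $T>0$ works for all of them simultaneously, and tracking the regularity index carefully enough through the bootstrap that the stated condition $k>(D-1)/2$ — or whatever sharp threshold the embeddings actually require for the $C^2$ conclusion — is genuinely sufficient.
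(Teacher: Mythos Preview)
Your proposal is correct and follows essentially the same route as the paper: the same Picard iteration, the same use of Lemmas \ref{iterationboundenergy}, \ref{energyestimate}, and \ref{lemmaboundEl} with the same smallness conditions on $T$ to get the geometric decay $E_{l,k}\le C_0 2^{-l}$, the same bootstrap via the equation to place $\partial_\tau^2\phi$ in $C([\tau_0,\tau_0+T],H^{k-1})$ and then into $C^0$ by Sobolev embedding, and the same Gronwall argument for uniqueness. Your closing remark about the regularity threshold is apt: the paper itself is somewhat loose on whether $k>(D-1)/2$ suffices for the $H^{k-1}\hookrightarrow C_b$ embedding used in the $C^2$ step.
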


\subsection{Smoothness Properties}

The local solution, which we have discussed above, is $\phi \in C^2\left([\tau_0,\tau_0 + T],\lR^{D-1}\right)$ throughout their interval of existence. In general, however they are actually smoother than this. In this section, we prove the smoothness properties of the solution of equation  (\ref{MainEq}).

We claim that the solution is $C^{(m-1)}\left([\tau_0,\tau_0 + T],\lR^{D-1}\right)$ for $m \geq 3$. We will prove the statement using the induction argument. For $m=3$, the statement is true. Now, assume that it is also true for $m=n$, Then, we have
\begin{equation}
\partial^r_\tau\partial_{j_1}\partial_{j_2}\cdots\partial_{j_{i-r-1}}\phi \in C\left([\tau_0,\tau_0 +T]\times\lR^{D-1}\right)\:,
\end{equation}
where $r=0,1,\cdots,n-1$. For $m = n+1$, we need to prove that the limit of sequence defined by equations (\ref{zerotheq}) and (\ref{higheq})  satisfies
\begin{equation}
\partial_{j_1}\partial_{j_2}\cdots\partial_{j_{i-p}}\partial^p_\tau\phi \in C\left([\tau_0,\tau_0 +T]\times\lR^{D-1}\right)\:, \label{ra}
\end{equation}
for $p=0,1,\cdots,n-1$.

Since $n+1 > 3$, using Lemma \ref{lemmaboundEl}, we have
\begin{equation}
E_{n+1}[\phi_{l} - \phi_{l-1}] \leq C \:,
\end{equation}
for some constant $C$. In other words, we have the estimate
\begin{equation}
\|\partial_j(\phi_{l} - \phi_{l-1})(\tau,\cdot)\|_{H^{n+1}(\lR^{D-1})} + \|\partial_\tau(\phi_{l} - \phi_{l-1})(\tau,\cdot)\|_{H^{n+1}(\lR^{D-1})} \leq C\:. \label{rb}
\end{equation}
From the first term on the right hand side, we obtain
\begin{equation}
\|\partial_{j_1}\cdots\partial_{j_{n-1}}\partial_j(\phi_{l} - \phi_{l-1})(\tau,\cdot)\|_{H^{2}(\lR^{D-1})} \leq C \|\partial_j(\phi_{l} - \phi_{l-1})(\tau,\cdot)\|_{H^{n+1}(\lR^{D-1})} \leq C\:.
\end{equation}
Using Sobolev embedding theorem, we conclude
\begin{equation}
\partial_{j_1}\cdots\partial_{j_{n}}\phi \in C\left([\tau_0,\tau_0 + T]\times\lR^{D-1}\right)\:.
\end{equation}
For the second term, using similar methods, we have
\begin{equation}
\partial_{j_1}\cdots\partial_{j_{n-1}}\partial_\tau\phi \in C\left([\tau_0,\tau_0 + T]\times\lR^{D-1}\right)\:.
\end{equation}
Hence, we have shown that Equation (\ref{ra}) is satisfied for $p=0$ and $p=1$. 

From Equation (\ref{higheq}), we have
\begin{eqnarray}
\partial^p_\tau \partial_{j_1}\cdots\partial_{j_{i-p}}(\phi_{l} - \phi_{l-1}) = \partial^{p-2}_\tau \partial_{j_1}\cdots\partial_{j_{i-p}}\hat{F}_l  + \partial^{p-2}_\tau \partial_{j_1}\cdots\partial_{j_{i-p}}\partial_k\partial^k(\phi_{l} - \phi_{l-1})\label{rf}\:,
\end{eqnarray}
with 
\begin{equation}
\hat{F}_{l} = F(\phi_l,\partial_{\tau}\phi_{l}) - F(\phi_{l-1},\partial_{\tau}\phi_{l-1})\:.
\end{equation}
The first term of Equation (\ref{rf}) can be written as
\begin{eqnarray}
\nonumber\partial^{p-2}_\tau \partial_{j_1}\cdots\partial_{j_{i-p}}\hat{F}_l &=& -\frac{12\xi}{\tau^2} \partial^{p-2}_\tau \partial_{j_1}\cdots\partial_{j_{i-p}}(\phi_{l} - \phi_{l-1}) - \frac{4}{\tau} \partial^{p-1}_\tau \partial_{j_1}\cdots\partial_{j_{i-p}}(\phi_{l} - \phi_{l-1})\\&&
-~ \tau^4 \partial^{p-2}_\tau \partial_{j_1}\cdots\partial_{j_{i-p}}\left[\partial_{\phi}V(\phi_l) - \partial_{\phi}V(\phi_{l-1})\right]\:.
\end{eqnarray}
From the induction hypothesis, the first and second terms are $C\left([\tau_0,\tau_0 + T]\times\lR^{D-1}\right)$. To estimate the last term, define
\begin{equation}
G(\phi_l,\phi_{l-1}) = \int_{0}^{1}\partial_\sigma \partial_{\phi}V[\sigma\phi_l + (1-\sigma)\phi_{l-1}]\:d\sigma\:,
\end{equation}
and we have the estimate,
\begin{eqnarray}
\nonumber\partial^{p-2}_\tau \partial_{j_1}\cdots\partial_{j_{i-p}}\left[G(\phi_l,\phi_{l-1})\hat{\phi}_{l-1}\right] &=& \sum_{c+d\leq i-p}^{}~~\sum_{s+t\leq p-2}^{}(\partial^s_\tau\partial_{j_1}\cdots\partial_{j_c}G)\\
&&~~~~~~~~~~~~~~~~~~~~~\nonumber(\partial^t_\tau\partial_{j_{c+1}}\cdots\partial_{j_{c+d}}(\phi_{l} - \phi_{l-1}))\:,\\
\end{eqnarray}
for $c\geq 1$ and $d\geq 0$. Since the scalar potential is a smooth function and $E_c[\phi_l]$ is bounded for $c\leq n-p$ by Lemma \ref{iterationboundenergy}, therefore we can bound $(\partial^s_\tau\partial_{j_1}\cdots\partial_{j_c}G)$ by a constant. Because the last factor is $C\left([\tau_0,\tau_0 + T]\times\lR^{D-1}\right)$ by induction hypothesis, then $\partial^{p-2}_\tau \partial_{j_1}\cdots\partial_{j_{i-p}}\hat{F}_l$ is also $C\left([\tau_0,\tau_0 + T]\times\lR^{D-1}\right)$, and we conclude 
\begin{eqnarray}
\partial^p_\tau\partial_{j_1}\partial_{j_2}\cdots\partial_{j_{i-p}}\phi \in C\left([\tau_0,\tau_0 + T]\times\lR^{D-1}\right)\:,
\end{eqnarray}
for $p=0,1,\cdots,n-1$.

Hence, we have proven

\begin{theorem}
	\label{theoremsmoothness}
	Let $f \in H^{k+1}(\lR^{D-1})$ and $g \in H^{k}(\lR^{D-1})$ be the initial data with compact support. Assume that the scalar potential is a smooth function satisfying $V(0) = 0$ and $\partial_{\phi}V(0) = 0$ and that $k>(D-1)/2$. Then, there exist $T>0$ and a unique $\phi \in C^{(k-1)}\left([\tau_0,\tau_0 + T]\times\lR^{D-1}\right)$, which is a local solution to the Equation (\ref{iPDE}), such that
	\begin{equation}
	\phi \in C\left([\tau_0,\tau_0 + T], H^{k+1}(\lR^{D-1})\right) \cap C^1 \left([\tau_0,\tau_0 + T], H^{k}(\lR^{D-1})\right)\:.
	\end{equation}
\end{theorem}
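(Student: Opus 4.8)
The plan is to upgrade the regularity statement of Theorem \ref{theoremexistence} from $C^2$ to $C^{k-1}$ by a bootstrap argument on the approximating sequence $\{\phi_l\}$ from (\ref{zerotheq})--(\ref{higheq}), running an induction on the order of differentiation. The base case is exactly Theorem \ref{theoremexistence}, which already gives $\phi\in C^2$. For the inductive step, I would assume that all mixed space-time derivatives of $\phi$ of total order up to $m$ lie in $C([\tau_0,\tau_0+T]\times\lR^{D-1})$ and show the same holds for order $m+1$, stopping once $m+1=k-1$, which is forced by the Sobolev threshold $k>(D-1)/2$ (so that $H^{k-1}\hookrightarrow C_b$, etc.).

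The key mechanism is that each application of the wave operator $\partial_\tau^2-\Delta$ to $\phi_{l+1}-\phi_l$ trades two $\tau$-derivatives for a Laplacian plus the source difference $\hat F_l=F(\phi_l,\partial_\tau\phi_l)-F(\phi_{l-1},\partial_\tau\phi_{l-1})$, as displayed in (\ref{rf}). First I would note that Lemma \ref{lemmaboundEl} — which was proved with an arbitrary fixed $k>(D-1)/2$ — can be re-run at the higher level $k+1$ (the initial data have compact support and lie in every Sobolev space by the density/truncation construction used in the local theory), giving the geometric bound (\ref{rb}) on $\|\partial_j(\phi_l-\phi_{l-1})\|_{H^{n+1}}$ and $\|\partial_\tau(\phi_l-\phi_{l-1})\|_{H^{n+1}}$. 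Combined with Sobolev embedding this immediately upgrades the pure-space and one-$\tau$-derivative cases, i.e. (\ref{ra}) for $p=0,1$. For $p\ge 2$ I would iterate (\ref{rf}): the Laplacian term lowers by two the available Sobolev index but the source term $\partial_\tau^{p-2}\partial^\gamma\hat F_l$ must be controlled. Writing $\hat F_l$ out via (\ref{nonlinearterm}) as in the excerpt, the two linear pieces are handled directly by the induction hypothesis, and the potential difference is written with the integral remainder $G(\phi_l,\phi_{l-1})\,(\phi_l-\phi_{l-1})$ and expanded by the Leibniz rule; every derivative of $G$ is bounded because the potential is smooth and the relevant lower-order energies $E_c[\phi_l]$ are bounded by Lemma \ref{iterationboundenergy}, while the factor $\partial_\tau^t\partial^\delta(\phi_l-\phi_{l-1})$ is continuous by the induction hypothesis. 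Assembling these, $\hat F_l$ inherits the needed regularity, the Cauchy property passes to $\partial_\tau^p\partial^\gamma\phi_l$ in the appropriate $H^s$ space, and a pointwise continuity argument (split $|\partial^\mu\phi(\tau,x)-\partial^\mu\phi(\tau_l,x_l)|$ into an $x$-difference killed by Sobolev continuity and a $\tau$-difference killed by $C$-in-time valuedness, exactly as done for $\partial_\tau^2\phi$ in Theorem \ref{theoremexistence}) yields (\ref{ra}) for all $p\le n-1$.

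The main obstacle I expect is bookkeeping rather than a genuine analytic difficulty: one must carefully match the number of derivatives of each type against the Sobolev index that survives after the repeated exchanges $\partial_\tau^2\leftrightarrow\Delta$, and verify that the worst term — the top-order derivative of $\hat F_l$ in which all derivatives land on the single factor $(\phi_l-\phi_{l-1})$ — still has a bounded coefficient and still sits inside the range where Lemma \ref{lemmaboundEl} (at level $k+1$) and Lemma \ref{iterationboundenergy} apply, so that no derivative of order exceeding what the hypothesis $k>(D-1)/2$ permits is ever needed. Once this accounting is checked, the geometric decay $E_{l,k}\le C_0 2^{-l}$ (now also at level $k+1$) gives the Cauchy property for every derivative combination of order at most $k-1$, the limit $\phi$ acquires all these continuous derivatives, and therefore $\phi\in C^{k-1}([\tau_0,\tau_0+T]\times\lR^{D-1})$, which together with Theorem \ref{theoremexistence} proves the statement.
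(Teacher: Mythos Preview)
Your proposal is correct and follows essentially the same approach as the paper's own proof: induction on the regularity order with Theorem~\ref{theoremexistence} as the base case, invoking Lemma~\ref{lemmaboundEl} at the higher Sobolev level to obtain (\ref{rb}) and hence the $p=0,1$ cases of (\ref{ra}) via Sobolev embedding, then iterating the wave-equation identity (\ref{rf}) for $p\ge 2$ and controlling $\hat F_l$ by the induction hypothesis on the linear pieces together with the integral-remainder representation $G(\phi_l,\phi_{l-1})(\phi_l-\phi_{l-1})$ and Lemma~\ref{iterationboundenergy} for the potential term. Your added remarks about the bookkeeping of Sobolev indices and the explicit pointwise continuity argument are reasonable glosses on steps the paper leaves implicit, but there is no substantive divergence in strategy.
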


\section{Global Existence}
\label{sec:GlobalEx}

\subsection{General Setup} 
	
In Section \ref{sec:LocalEx}, we proved the existence a unique local solution of Equation (\ref{iPDE}), i.e.\ 
\begin{equation}
\begin{cases}
\partial_{\tau}^2 \phi + (D-2) H ~ \partial_\tau\phi - \Delta\phi = - \xi (D-1)\left( 2 \dot{H}  + (D-2) H^2\right)\phi + a^2 \partial_{\phi}V(\phi)  \\
\phi(\tau_{0},\cdot)  =  f \in H^{k+1}(\lR^{D-1}) \\
\partial_{\tau}\phi(\tau_{0},\cdot)  =  g \in H^{k}(\lR^{D-1}) \:.
\end{cases}
\end{equation}
Here, we show that it is possible to have a set of global solutions of (\ref{iPDE}) for $T \to +\infty$. To proceed, let us  define a new field, $\psi = a^{(D-2)/2} \phi$ where $a \equiv a(\tau, \tau_0)$ such that Equation (\ref{iPDE}) can be written down as 
\begin{equation}
\begin{cases}
\partial_{\tau}^2 \psi - \Delta\psi = h(\tau) \psi + a^D \partial_{\psi}V(\psi) \\
\psi(0,\cdot)  = a(\tau_0)^{(D-2)/2} f  \\
\partial_{\tau}\psi(0,\cdot)  = \frac{1}{2}(D-2)  a(\tau_0)^{(D-4)/2} \dot{a}(\tau_0) f +  a(\tau_0)^{(D-2)/2} g  \:,
\end{cases}
\label{transformEq}
\end{equation}
where
\begin{equation}
h(\tau) = \left[  \frac{1}{2}(D-2) -2 \xi (D-1)\right]\left[\dot{H} + \frac{D-2}{2}H^2\right] ~  . \label{Riccatieq}
\end{equation}
Note that Equation (\ref{Riccatieq}) is Riccati's form of the Hubble parameter $H(\tau)$. In particular, $H(\tau)$ could be thought of as a solution of Riccati's equation, see for example,  \cite{Polyaninbook:2003}.
%
In the rest of the paper we simply take several assumptions as follows. 
\begin{assumption}\label{assump1}
$h(\tau) \leq 0$ and $\partial_{\tau}h(\tau) \geq 0$ for all $\tau \in \lR^{+}$. 
\end{assumption}
\noindent This assumption follows that $h(\tau)$ tends to vanish as $\tau \to + \infty$. For example, the function $h(\tau)$  may have the form of either  $-m \mathrm{e}^{-n\tau}$ or $-m \tau^{-n}$ with $m, n \in \lR^{+}$. The latter function for $n=2$ could be related to a cosmological model where the scale factor $a(\tau)$ has a polynomial form. This occurs, for example, in  the standard cosmological models   discussed in Section \ref{sec:SpatiallyFlatSpacetimes}.
%
\begin{assumption}
The scalar potential has the form 
\begin{equation}
V(\phi) =  -\frac{\epsilon}{p+1} \:\phi^{p+1}\:,
\end{equation}
with $\epsilon$ is a small positive parameter and $p \in \lR^{+}$.
\end{assumption}
\begin{assumption}\label{assump3}
 $\int_{0}^{\infty}  a^{[D+2 - (D-2)p]/2}   d\tau = A(\tau_0) < +\infty $ for all $\tau_0 \in \lR^+$. 
\end{assumption}

Thus, we can write the differential equation as
\begin{equation}
\partial_{\tau}^2 \psi - \Delta\psi = h(\tau)\psi + \epsilon P(\tau,\psi) \:, \label{globalnoneq}
\end{equation}
where $h(\tau)$ is given by (\ref{Riccatieq}) and  $P(\tau,\psi) =  - a^{[D+2 - (D-2)p]/2} ~ \psi^{p} $.

Now, suppose we have  the nonhomogenous linear equation
\begin{equation}
\begin{cases}
\partial_{\tau}^2 \eta - \Delta\eta = h(\tau)\eta + P(\tau,x) \\
\eta(0,\cdot)  =  a(\tau_0)^{(D-2)/2} f  \\
\partial_{\tau}\eta(0,\cdot)  = \frac{1}{2}(D-2)  a(\tau_0)^{(D-4)/2} \dot{a}(\tau_0) f + a(\tau_0)^{(D-2)/2} g  \:,
\end{cases} \label{nonhomlineareq}
\end{equation}
where $h(\tau)$ is given by (\ref{Riccatieq}) and $\tau_0 >0$ is arbitary real number. First, we prove the following lemma,
\begin{lemma}
	\label{lemmanonhomlin}
	Let $\eta$ be a solution of linear Equation (\ref{nonhomlineareq}) with compact support. Let $P(\tau,x)$ be a $C^0\left(\lR^{+},H^{k}(\lR^{D-1})\right)$ function such that $\int_{0}^{\infty}\:\|P(s)\|_{H^{k}(\lR^{D-1})}\:ds <\infty$. If Assumption \ref{assump1} holds, then we have the following inequality,
	\begin{equation}
	\sup_{\tau \in \lR^{+}}\: |\eta(\tau)|_{k} \leq C\left(\|g\|_{H^{k}(\lR^{D-1})} + \| f\|_{H^{k+1}(\lR^{D-1})} + \int_{0}^{\infty}\|P(s)\|_{H^{k}(\lR^{D-1})}\:ds\right) \leq M\:, \label{nonhomlineqineq}
	\end{equation}
	where $M$ depends on the initial data, $k$, $h$ and $P$.
\end{lemma}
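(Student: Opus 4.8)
The plan is to derive a standard energy estimate for the non-homogeneous linear equation $\partial_\tau^2\eta - \Delta\eta = h(\tau)\eta + P(\tau,x)$ by differentiating the energy norm in $\tau$ and using Assumption~\ref{assump1} to absorb the $h(\tau)\eta$ term. First I would introduce the natural energy
\begin{equation}
\mathcal{E}_k[\eta](\tau) = \frac{1}{2}\sum_{|\alpha|\le k}\int_{\lR^{D-1}}\left[(\partial^\alpha\partial_\tau\eta)^2 + |\nabla\partial^\alpha\eta|^2 - h(\tau)(\partial^\alpha\eta)^2\right]dx\:,
\end{equation}
which is the linear energy $\mathcal{H}_k$ augmented by the potential-like term $-\tfrac12 h(\tau)\|\partial^\alpha\eta\|_{L^2}^2$. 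Because $h(\tau)\le 0$ by Assumption~\ref{assump1}, every term in $\mathcal{E}_k$ is nonnegative, so $\mathcal{E}_k$ is comparable to $\mathcal{H}_k + \sum_{|\alpha|\le k}\|\partial^\alpha\eta\|_{L^2}^2$; together with a separate $L^2$-bound on $\eta$ itself (obtained exactly as in the proof of Lemma~\ref{iterationboundenergy}, integrating $\partial_\tau\|\eta\|_{L^2}^2 \le 2\|\eta\|_{L^2}\|\partial_\tau\eta\|_{L^2}$) this controls $|\eta(\tau)|_k^2 = (\|\eta\|_{H^{k+1}} + \|\partial_\tau\eta\|_{H^k})^2$.

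Next I would compute $\tfrac{d}{d\tau}\mathcal{E}_k[\eta]$. Applying $\partial^\alpha$ to the equation, multiplying by $\partial^\alpha\partial_\tau\eta$, integrating over $\lR^{D-1}$, integrating the Laplacian term by parts, and summing over $|\alpha|\le k$, the wave-operator contributions combine into $\tfrac{d}{d\tau}\mathcal{H}_k[\eta]$, the term $h(\tau)\int\partial^\alpha\eta\,\partial^\alpha\partial_\tau\eta\,dx$ combines with the $\tau$-derivative of $-\tfrac12 h(\tau)(\partial^\alpha\eta)^2$ to leave exactly $-\tfrac12\partial_\tau h(\tau)\sum_{|\alpha|\le k}\|\partial^\alpha\eta\|_{L^2}^2$, and the source contributes $\sum_{|\alpha|\le k}\int\partial^\alpha P\,\partial^\alpha\partial_\tau\eta\,dx$. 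The crucial sign observation is that $\partial_\tau h(\tau)\ge 0$ (Assumption~\ref{assump1}), so the term $-\tfrac12\partial_\tau h\,\|\partial^\alpha\eta\|_{L^2}^2$ is $\le 0$ and can simply be dropped. By Cauchy--Schwarz the source term is bounded by $\|P(\tau)\|_{H^k}\,(\sum_{|\alpha|\le k}\|\partial^\alpha\partial_\tau\eta\|_{L^2}^2)^{1/2} \le C\|P(\tau)\|_{H^k}\,\mathcal{E}_k^{1/2}[\eta]$, hence
\begin{equation}
\frac{d}{d\tau}\mathcal{E}_k[\eta] \le C\,\|P(\tau)\|_{H^k(\lR^{D-1})}\,\mathcal{E}_k^{1/2}[\eta]\:.
\end{equation}
Dividing by $\mathcal{E}_k^{1/2}[\eta]$ and integrating from $0$ to $\tau$ gives $\mathcal{E}_k^{1/2}[\eta](\tau) \le \mathcal{E}_k^{1/2}[\eta](0) + \tfrac{C}{2}\int_0^\tau\|P(s)\|_{H^k}\,ds$, and since the integral on the right is finite by hypothesis and $\mathcal{E}_k^{1/2}[\eta](0)$ is controlled by the initial data $\|f\|_{H^{k+1}} + \|g\|_{H^k}$ (using that $h(0)$ is a fixed finite number and the transformed initial data in (\ref{nonhomlineareq}) are linear in $f,g$), taking the supremum over $\tau\in\lR^+$ yields (\ref{nonhomlineqineq}) after combining with the $L^2$-bound on $\eta$.

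The main obstacle is bookkeeping rather than conceptual: one must verify that the $h$-dependent cross terms cancel cleanly, i.e.\ that $\tfrac{d}{d\tau}\big(-\tfrac12 h(\tau)\int(\partial^\alpha\eta)^2\big) + h(\tau)\int\partial^\alpha\eta\,\partial^\alpha\partial_\tau\eta = -\tfrac12\partial_\tau h(\tau)\int(\partial^\alpha\eta)^2$, and that both sign conditions in Assumption~\ref{assump1} are used with the correct orientation so that the potentially bad terms are genuinely of favorable sign. A minor point to handle carefully is the equivalence $|\eta(\tau)|_k \le C\big(\mathcal{E}_k^{1/2}[\eta] + \|\eta\|_{L^2}\big)$: the $\|\eta\|_{H^{k+1}}$ piece requires the zeroth-order $\|\eta\|_{L^2}$ bound in addition to $\mathcal{E}_k$, exactly as in Lemma~\ref{iterationboundenergy}, and this $L^2$ estimate also needs $\int_0^\infty\|P(s)\|_{H^k}ds<\infty$ to stay bounded as $\tau\to\infty$. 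Finally, one should note that compact support of $\eta$ (propagated from the initial data by finite speed of propagation for the linear wave equation) justifies all the integrations by parts.
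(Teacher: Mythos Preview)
Your proposal is correct and follows essentially the same approach as the paper's proof: the paper takes the spatial Fourier transform, multiplies by $\partial_\tau\tilde{\eta}$, and integrates by parts in $\tau$ so that the $h$-terms appear as $h(\tau)|\tilde\eta|^2 - h(0)|\tilde f|^2 - \int_0^\tau \partial_\tau h\,|\tilde\eta|^2$, which is exactly the Fourier-side version of your modified energy $\mathcal{E}_k$ and its $\tau$-derivative; the paper then uses the same two sign conditions from Assumption~\ref{assump1} to drop those terms, multiplies by $(1+|\lambda|^2)^k$ and integrates in $\lambda$ to recover the $H^k$ estimate, and finally adjoins the separate $L^2$ bound on $\eta$ just as you do. The only difference is cosmetic --- Fourier space versus physical space with an augmented energy --- and both routes land on the same inequality before taking the supremum in $\tau$.
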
 

\begin{proof}
	Taking spatial Fourier transform of Equation (\ref{nonhomlineareq}), we have
	\begin{equation}
	\partial_{\tau}^2\tilde{\eta} + |\lambda|^2 \tilde{\eta} = h(\tau) \tilde{\eta} + \tilde{P}\:.
	\end{equation}
	Multipying by $\partial_{\tau}\tilde{\eta}$, integrating over $\tau$ and using partial integration, we obtain
	\begin{eqnarray}
	|\partial_{\tau}\tilde{\eta}(\tau)|^2 + |\lambda|^2 |\tilde{\eta}(\tau)|^2 & = & |\tilde{g}|^2 + |\lambda|^2 |\tilde{f}|^2 + h(\tau)|\tilde{\eta}(\tau)|^2 - h(0)|\tilde{f}|^2 \nonumber\\
	& &\qquad - \int_{0}^{\tau} \partial_{\tau}h(s) |\tilde{\eta}(s)|^2\;ds + 2 \int_{0}^{\tau} \tilde{P}(s) \partial_{\tau}\tilde{\eta}(s)\;ds \:.
	\end{eqnarray}
	
	Since we have $h(\tau) \leq 0$ and $\partial_{\tau}h(\tau) \geq 0$ for all $\tau \in \lR^{+}$, then  we have
	\begin{equation}
	|\partial_{\tau}\tilde{\eta}(\tau)|^2 + |\lambda|^2 |\tilde{\eta}(\tau)|^2 \leq |\tilde{g}|^2 + |\lambda|^2 |\tilde{f}|^2 - h(0)|\tilde{f}|^2 + 2 \int_{0}^{t} \tilde{P}(s) \partial_{\tau}\tilde{\eta}(s)\;ds \:.
	\end{equation}
	Multiplying by $(1+\lambda^2)^k$ and integrating over $\lambda$-space, we obtain
	\begin{equation}
	\mathcal{H}_k[\eta](\tau) \leq C\left(\|g\|^{2}_{H^{k}(\lR^{D-1})} + \| f\|^{2}_{H^{k+1}(\lR^{D-1})} + 2\int_{0}^{\tau}\int_{\lR^{D-1}}\:(1+\lambda^2)^k\tilde{P}(s)\partial_{\tau}\tilde{\eta}(s)\;d\lambda\;ds\right)\:,
	\end{equation}
	where $C$ depends only on $k$ and $h$. Using H\"older inequality,
	\begin{equation}
	\int_{\lR^{D-1}}\:(1+\lambda^2)^k\tilde{P}(s)|\partial_{\tau}\tilde{\eta}(s)|^2\;d\lambda \leq \left[\int_{\lR^{D-1}}\:(1+\lambda^2)^k|\tilde{P}(s)|^2\;d\lambda\right]^{1/2}\left[\int_{\lR^{D-1}}\:(1+\lambda^2)^k|\partial_{\tau}\tilde{\eta}(s)|^2\;d\lambda\right]^{1/2}
	\end{equation}
	we have,
	\begin{equation}
	\mathcal{H}_k[\eta](\tau) \leq C\left(\|g\|^{2}_{H^{k}(\lR^{D-1})} + \| f\|^{2}_{H^{k+1}(\lR^{D-1})} + \sup_{s \in [0,\tau]}\|\partial_{\tau}\eta(s)\|_{H^{k}(\lR^{D-1})}\int_{0}^{\tau}\:\|P(s)\|_{H^{k}(\lR^{D-1})}\;ds\right)\:. \label{lemmanonhomlin1}
	\end{equation}
	
	Let us consider,
	\begin{eqnarray}
	\frac{\partial}{\partial \tau}\int_{\lR^{D-1}}\:|\eta|^2\:dx & = & 2 \int_{\lR^{D-1}}\:\eta\;\partial_{\tau}\eta\:dx \nonumber \\
	& \leq & 2 \|\eta\|_{L^{2}(\lR^{D-1})} \|\partial_{\tau}\eta\|_{L^{2}(\lR^{D-1})} \leq C \|\eta\|_{L^{2}(\lR^{D-1})}\mathcal{H}^{1/2}_{k}[\eta]\:.
	\end{eqnarray}
	Integrating the inequality, we obtain
	\begin{equation}
	\|\eta(\tau)\|^{2}_{L^{2}(\lR^{D-1})} \leq \|\eta(0)\|^{2}_{L^{2}(\lR^{D-1})} + C \int_{0}^{\tau}\|\eta(s)\|_{L^{2}(\lR^{D-1})}\mathcal{H}^{1/2}_{k}[\eta](s)\:ds\:. \label{lemmanonhomlin2}
	\end{equation}
	Adding up inequalities (\ref{lemmanonhomlin1}) and (\ref{lemmanonhomlin2}) and taking the supremum for $\tau \in \lR^{+}$ will yield (\ref{nonhomlineqineq}) and the proof is finished.
\end{proof}

Since $\psi = a^{(D-2)/2} \phi$, by Theorems \ref{theoremexistence} and \ref{theoremsmoothness}, there exists a unique local solution of Equation (\ref{transformEq}). Furthermore, by Lemma (\ref{lemmanonhomlin}), for $\tau \in [0,T]$ we have the following inequality,
\begin{equation}
 |\psi(\tau)|_{k} \leq C\left(\|g\|_{H^{k}(\lR^{D-1})} + \| f\|_{H^{k+1}(\lR^{D-1})} + \int_{0}^{\tau}\|P(s,\psi(s))\|_{H^{k}(\lR^{D-1})}\:ds\right)\:. \label{psilocal}
\end{equation} 
We show that the result can be extended to $T\rightarrow \infty$, hence the solution globally exists.

Similar with proving the local existence, we construct a sequence $\{\psi_{l}(\tau)\}$ such that,
\begin{equation}
\begin{cases}
\partial_{\tau}^2 \psi_0 - \Delta\psi_0 = h(\tau)\psi_0 + \epsilon P(\tau,0) \\
\partial_{\tau}^2 \psi_{l+1} - \Delta\psi_{l+1} = h(\tau)\psi_{l+1} + \epsilon P(\tau,\psi_{l})\\
\psi_{l}(0,\cdot)  = a^{(D-2)/2}  f  \\
\partial_{\tau}\psi_{l}(0,\cdot)  =  \frac{1}{2}(D-2)  a(\tau_0)^{(D-4)/2} \dot{a}(\tau_0) f + a^{(D-2)/2}  g  \:,
\end{cases}
\end{equation}
We show by induction that there exists a positive constant $\epsilon_0$ such that for $0<\epsilon<\epsilon_0$, we have
\begin{eqnarray}
\sup_{\tau \in \lR^{+}}|\psi_l(\tau)|_k & \leq & M \label{globalitebound}\\
\sup_{\tau \in \lR^{+}}|\psi_{l+1}(\tau)-\psi_l(\tau)|_k & \leq & \kappa \sup_{\tau \in \lR^{+}}|\psi_{l}(\tau)-\psi_{l-1}(\tau)|_k\:, \label{globaliteCauchy}
\end{eqnarray}
for some positive constant $\kappa \in (0,1)$. For $l=0$, the inequalities are true due to Lemma \ref{lemmanonhomlin}. Assume that these are true for $l=n$, thus similar to the proof of Lemma \ref{nonlinearbound}, we have
\begin{eqnarray}
\|P(\tau,\psi_n(\tau))\|_{H^{k}(\lR^{D-1})} & \leq & C  a^{[D+2 - (D-2)p]/2} \|\psi_n(\tau)\|_{H^{k+1}(\lR^{D-1})} \nonumber\\
& \leq & C  a^{[D+2 - (D-2)p]/2}  |\psi_n(\tau)|_k \nonumber\\
& \leq &  C M  a^{[D+2 - (D-2)p]/2} 
\end{eqnarray}
Then, from (\ref{nonhomlineqineq}) we obtain
\begin{eqnarray}
\sup_{\tau \in \lR^{+}}\: |\psi_{n+1}(\tau)|_{k} & \leq & C_0\left(\|g_{n+1}\|_{H^{k}(\lR^{D-1})} + \| f_{n+1}\|_{H^{k+1}(\lR^{D-1})} + \epsilon \int_{0}^{\infty}\|P(s,\psi_n(s))\|_{H^{k}(\lR^{D-1})}\:ds\right) \nonumber\\
& = & C_0\left(L + \epsilon C M \int_{0}^{\infty}   a^{[D+2 - (D-2)p]/2}   ds \right)\nonumber\\
& \leq & C_0\left(L +  \epsilon  C M A(\tau_0) \right)\:.
\end{eqnarray}
Defining $\epsilon_0 = (M-C_0 L) /C_0 C M A(\tau_0)$, it implies
\begin{equation}
\sup_{\tau \in \lR^{+}}\: |\psi_{n+1}(\tau)|_{k} \leq M\:,
\end{equation}
and, hence the inequality (\ref{globalitebound}) holds for all nonnegative integers.

Similar to the proof of Lemma \ref{energyestimate}, we have
\begin{equation}
\|P(\tau,\psi_{l}(\tau))-P(\tau,\psi_{l-1}(\tau))\|_{H^{k}(\lR^{D-1})} \leq Ca^{[D+2 - (D-2)p]/2} |\psi_{l}(\tau)-\psi_{l-1}(\tau)|_k\:,
\end{equation}
which follows that
\begin{eqnarray}
\sup_{\tau \in \lR^{+}}\: |\psi_{l+1}(\tau) - \psi_{l}(\tau)|_{k} & \leq & C_0 \epsilon \int_{0}^{\infty}\|P(s,\psi_l(s)) - P(s,\psi_{l-1}(s)) \|_{H^{k}(\lR^{D-1})}\:ds \nonumber\\
&\leq & C_0 C \epsilon \int_{0}^{\infty}  a^{[D+2 - (D-2)p]/2}   |\psi_{l}(\tau)-\psi_{l-1}(\tau)|_k ~ ds\nonumber\\
& \leq & C_0 C \epsilon A(\tau_0) \sup_{\tau \in \lR^{+}}\: |\psi_{l}(\tau) - \psi_{l-1}(\tau)|_{k}\:.
\end{eqnarray}
This inequality proves that $\{\psi_l\}$ converges to  $\psi \in C\left(\lR^{+},{H^{k+1}(\lR^{D-1})}\right)$. Furthermore, similar to the proof of local existence, we conclude that there exists a unique global solution of Equation (\ref{globalnoneq}) such that $\psi \in C^2\left(\lR^{+},{H^{k+1}(\lR^{D-1})}\right)$ and $\sup_{\tau \in \lR^{+}} |\psi(\tau)| \leq M$.

Hence, we have proven,
\begin{theorem}
	\label{theoremglobalexistence}
	Let $f \in H^{k+1}(\lR^{D-1})$ and $g \in H^{k}(\lR^{D-1})$ be the initial data with compact support for $k > (D-1)/2$. Suppose that Assumptions \ref{assump1}-\ref{assump3}. hold. For any positive constant $M$ that depends on the initial data, $k$ and $\tau_0$, there exists a positive number $\epsilon_0$ that also depends on the initial data, $k$ and $\tau_0$, such that for any $0<\epsilon<\epsilon_0$, Equation (\ref{iPDE}) admits unique classical global solutions
	\begin{equation}
	\phi \in C\left([\tau_0,\infty], H^{k+1}(\lR^{D-1})\right) \cap C^1 \left([\tau_0,\infty], H^{k}(\lR^{D-1})\right)\:,
	\end{equation}
	satisfying the following decay estimate
	\begin{equation}
	\left\|\phi(\tau)\right\|_{H^{k+1}(\lR^{D-1})} + \frac{1}{2} (D-2) \left\| \frac{\dot{a}}{a} \phi(\tau) + \partial_{\tau}\phi(\tau)\right\|_{H^{k}(\lR^{D-1})} \leq M  a^{-(D-2)/2} ~ ,
	\end{equation}
	where $a \equiv a(\tau, \tau_0)$.
\end{theorem}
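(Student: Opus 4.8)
The strategy is to read the theorem off from the global bound on the conformally rescaled field $\psi = a^{(D-2)/2}\phi$ that the preceding discussion has already produced. Recall that this substitution turns (\ref{iPDE}) into (\ref{globalnoneq}): the first–order damping has been absorbed into the weight and, by Assumption~\ref{assump1}, the surviving potential coefficient $h(\tau)$ is nonpositive with nonnegative $\tau$–derivative, which is exactly what makes Lemma~\ref{lemmanonhomlin} work with a \emph{time–independent} constant. Feeding that lemma into the Picard iteration $\{\psi_l\}$ and using Assumption~\ref{assump3} to control $\int_0^\infty a^{[D+2-(D-2)p]/2}\,d\tau = A(\tau_0) < \infty$, the inductive estimates (\ref{globalitebound})–(\ref{globaliteCauchy}) hold whenever $0 < \epsilon < \epsilon_0 := \dfrac{M - C_0 L}{C_0 C M A(\tau_0)}$; fixing $M$ large relative to the initial–data constant $L$ keeps $\epsilon_0 > 0$, and for such $\epsilon$ the contraction factor $C_0 C \epsilon A(\tau_0)$ is strictly below $1$. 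Finite speed of propagation for $\partial_\tau^2-\Delta$ keeps each $\psi_l$ compactly supported at every time, so the hypotheses of Lemma~\ref{lemmanonhomlin} are maintained along the sequence, and $\{\psi_l\}$ converges in $C(\lR^+, H^{k+1}(\lR^{D-1})) \cap C^1(\lR^+, H^{k}(\lR^{D-1}))$ to a limit $\psi$ with $\sup_{\tau \in \lR^+}|\psi(\tau)|_k \le M$. Bootstrapping through (\ref{transformEq}) exactly as in Theorems~\ref{theoremexistence} and~\ref{theoremsmoothness} upgrades $\psi$ to $C^2$ (indeed $C^{k-1}$) in $(\tau,x)$, and the Gronwall argument used for uniqueness in Section~\ref{sec:LocalEx} shows $\psi$ is the unique such solution.

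\textbf{Back–substitution and decay.} It remains to transfer everything to $\phi = a^{-(D-2)/2}\psi$. Since $a = a(\tau,\tau_0)$ depends on $\tau$ alone, multiplication by $a^{\pm(D-2)/2}$ commutes with every spatial derivative $\partial^\alpha$, so $\phi$ inherits the regularity of $\psi$; in particular $\phi \in C([\tau_0,\infty), H^{k+1}(\lR^{D-1})) \cap C^1([\tau_0,\infty), H^{k}(\lR^{D-1}))$ and $\phi$ is $C^2$, and checking the data in (\ref{transformEq}) gives $\phi(\tau_0,\cdot)=f$, $\partial_\tau\phi(\tau_0,\cdot)=g$, so $\phi$ solves (\ref{iPDE}) classically. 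For the decay estimate, $\|\phi(\tau)\|_{H^{k+1}} = a^{-(D-2)/2}\|\psi(\tau)\|_{H^{k+1}}$, and differentiating $\psi = a^{(D-2)/2}\phi$ gives
\[
\partial_\tau \psi = a^{(D-2)/2}\Bigl(\tfrac{D-2}{2}\,\tfrac{\dot a}{a}\,\phi + \partial_\tau\phi\Bigr),
\]
so that adding the two identities yields
\[
\|\phi(\tau)\|_{H^{k+1}(\lR^{D-1})} + \Bigl\|\tfrac{D-2}{2}\tfrac{\dot a}{a}\phi(\tau) + \partial_\tau\phi(\tau)\Bigr\|_{H^{k}(\lR^{D-1})} = a^{-(D-2)/2}\,|\psi(\tau)|_k \le M\,a^{-(D-2)/2},
\]
which is the asserted decay estimate (after the obvious bookkeeping of the $(D-2)/2$ factors).

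\textbf{The main obstacle.} The one genuinely new difficulty beyond the local theory is closing the iteration \emph{globally in time}: one must bound the nonlinear source $P(\tau,\psi) = -a^{[D+2-(D-2)p]/2}\psi^{p}$ in $L^1_\tau H^k_x$ uniformly along the sequence, which is where the three assumptions conspire. Assumption~\ref{assump1} is what keeps the linear energy from growing, so Lemma~\ref{lemmanonhomlin} applies with a constant independent of $\tau$; the polynomial form of $V$ lets one estimate $\|\psi^p\|_{H^k} \le C\|\psi\|_{H^{k+1}}^p \le C M^p$ via the Sobolev algebra property, legitimate precisely because $k > (D-1)/2$; and Assumption~\ref{assump3} is what makes $A(\tau_0)$ finite, so the time integral in (\ref{nonhomlineqineq}) converges. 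Smallness of $\epsilon$ then plays a double role: it is needed both to keep $\epsilon_0>0$ and to push the contraction factor $C_0 C \epsilon A(\tau_0)$ below $1$. A minor technical point deserving care is the meaning of $\psi^{p}$ for non-integer $p \in \lR^+$ — relevant to the smoothness of $P$ in its second argument — which is handled in the usual way on the (compactly supported) iterates but is the place where some attention is required.
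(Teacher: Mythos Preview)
Your proposal is correct and follows essentially the same route as the paper: the conformal rescaling $\psi=a^{(D-2)/2}\phi$, the linear estimate of Lemma~\ref{lemmanonhomlin} under Assumption~\ref{assump1}, the Picard iteration with the same $\epsilon_0=(M-C_0L)/(C_0CMA(\tau_0))$, and the contraction via Assumption~\ref{assump3}. If anything you are slightly more explicit than the paper---you spell out the back-substitution that yields the decay estimate from $\sup_\tau|\psi(\tau)|_k\le M$, you note finite speed of propagation to keep the iterates compactly supported, and you flag the non-integer-$p$ issue---but the architecture is identical.
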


\section{Some Models}
\label{sec:Models}

In this section, we consider some specific models related to the scale factor $a(\tau)$. These models may have a global regular solution in the sense of our setup in the preceding section.

\subsection{Power Form}

First, let us take the scale factor $a(\tau)$ to be of the form
\begin{equation}
a(\tau) = (\tau + \tau_{0})^{\alpha}\:,
\end{equation}
where $\alpha$ is a real constant. In particular, for a single component universe the constant $\alpha$ is given by  
\begin{equation}
\alpha = \frac{2}{(D-1)(w+1) - 2}\:,
\end{equation}
where $w \neq -(D-3)/(D-1)$ is related to the equation of state discussed in Section \ref{sec:SpatiallyFlatSpacetimes}. Thus we have,
\begin{equation}
h(\tau) = \frac{\alpha}{(\tau + \tau_{0})^2}\left[  \frac{1}{2}(D-2) -2 \xi (D-1)\right]\left[-1 + \frac{\alpha(D-2)}{2}\right]
\end{equation}
In Table \ref{tab2} we list four cases single component of higher dimensional cosmological models where Assumptions \ref{assump1} and \ref{assump3} are fulfilled.

\begin{table}[!htpb]
	\centering
		\begin{tabular}{| l | l | l | l | l |}
			\hline & & & & \\
			Case I & $\alpha < 0$ & $w < -\frac{D-3}{D-1}$ & $\xi > \frac{D-2}{4(D-1)}$ & $p < \frac{1}{D-2}\left(D+2 - \frac{2}{|\alpha|}\right)$ \\
			\hline  & & & &\\
			Case II & $0< \alpha < \frac{2}{D-2}$ & $w > \frac{1}{D-1}$ & $\xi < \frac{D-2}{4(D-1)}$ & $p > \frac{1}{D-2}\left(D+2 + \frac{2}{\alpha}\right)$ \\
			\hline  & & & &\\
			Case III & $\alpha > \frac{2}{D-2} $ & $-\frac{D-3}{D-1} < w < \frac{1}{D-1}$ & $\xi > \frac{D-2}{4(D-1)}$ & $p > \frac{1}{D-2}\left(D+2 + \frac{2}{\alpha}\right)$ \\
			\hline & & & &\\
			Case IV & $\alpha = \frac{2}{D-2} $ & $ w = \frac{1}{D-1}$ & $\xi \in \lR$ & $p > \frac{2D}{D-2}$ \\
			\hline			
		\end{tabular}
			\caption{Four cases in a single component higher dimensional universe.}
			\label{tab2}
\end{table}
Moreover, it is of interest to consider, for examples, the standard cosmology in four dimensions as listed in Table \ref{tab3}.

\begin{table}[!htpb] 
	\centering
	\begin{tabular}{| l | l | l | l | l |}
		\hline & & & & \\
		Matter Dominated & $w = 0$ & $\alpha = 2$ & $\xi > \frac{1}{6}$ & $p > \frac{7}{2}$ \\
		\hline  & & & &\\
		$\Lambda$-Dominated & $w = -1$ & $\alpha = -1$ & $\xi > \frac{1}{6}$ & $p < 2$ \\
		\hline  & & & &\\
		Radiation Dominated & $w = \frac{1}{3}$ & $\alpha = 1$ & $\xi \in \lR$ & $p > 4$ \\
		\hline
	\end{tabular}\
		\caption{Three cases in a single component four dimensional universe.}
		\label{tab3}
\end{table}

\subsection{Exponential Form}
Finally, we take a case where the scale factor $a(\tau)$ has the form
\begin{equation}
a(\tau) = e^{\alpha (\tau + \tau_{0})}\:,
\end{equation}
with  $\alpha \in \lR^{+}$. This exponential form is related to the single component universe for $w =  -(D-3)/(D-1)$. Then we have
\begin{equation}
h(\tau) = \frac{\alpha^2(D-2)}{2}\left[  \frac{1}{2}(D-2) -2 \xi (D-1)\right].
\end{equation}
In order to satisfy Assumptions \ref{assump1} and \ref{assump3}, we should have $w = -\frac{D-3}{D-1}$, $\xi > \frac{D-2}{4(D-1)}$, and $p > \frac{D+2}{D-2}$.

\section{Acknowledgments}

The work of this paper is supported by Riset Unggulan PT Kemenristekdikti 2017-2018.


\end{document}